\newcommand{\bbR}{\mathbb{R}}
\newcommand{\calH}{\mathcal{H}}
\newcommand{\calP}{\mathcal{P}}
\newcommand{\calG}{\mathcal{G}}
\newcommand{\poly}{\mathrm{poly}}
\newcommand{\GASP}{\scalebox{0.9}{\sf GASP}}
\newcommand{\gGASP}{\scalebox{0.9}{\sf gGASP}}
\newtheorem{theorem}{Theorem}
\newtheorem{proposition}[theorem]{Proposition}
\newtheorem{cor}[theorem]{Corollary}
\newtheorem{observation}[theorem]{Observation}
\newtheorem{definition}{Definition}
\newtheorem{example}{Example}
\begin{document}
\title{On Parameterized Complexity of Group Activity Selection Problems on Social Networks}
\numberofauthors{3}
\author{
\alignauthor
Ayumi Igarashi\\
       \affaddr{University of Oxford}\\
      \affaddr{Oxford, United Kingdom}\\
       \email{ayumi.igarashi@cs.ox.ac.uk}
\alignauthor
Robert Bredereck\\
       \affaddr{University of Oxford}\\
       \affaddr{Oxford, United Kingdom}\\
       \email{robert.bredereck@tu-berlin.de}
\alignauthor
Edith Elkind\\
      \affaddr{University of Oxford}\\
      \affaddr{Oxford, United Kingdom}\\
       \email{elkind@cs.ox.ac.uk}
}
\maketitle

\begin{abstract}
In Group Activity Selection Problem (\GASP), players form coalitions to participate in activities
and have preferences over pairs of the form (activity, group size).
Recently, Igarashi et al.~\cite{Igarashi2016b} have initiated the study of group activity selection problems on social 
networks (\gGASP): a group of players can engage in the same activity 
if the members of the group form a connected subset of the underlying 
communication structure. Igarashi et al.~have primarily focused on Nash stable outcomes, and showed
that many associated algorithmic questions are computationally hard even for very simple networks.
In this paper we study the parameterized complexity of \gGASP\ with respect to the number of activities
as well as with respect to the number of players,
for several solution concepts such as Nash stability, individual stability and core stability.
The first parameter we consider in the number of activities. For this parameter, 
we propose an FPT algorithm for Nash stability for the case where the social network is acyclic
and obtain a W[1]-hardness result for cliques (i.e., for classic \GASP); similar results
hold for individual stability. In contrast, finding a core stable outcome is hard even
if the number of activities is bounded by a small constant, both for classic \GASP\
and when the social network is a star. Another parameter we study
is the number of players. While all solution concepts we consider
become polynomial-time computable when this parameter is bounded by a constant,
we prove W[1]-hardness results for cliques (i.e., for classic \GASP). 
\end{abstract}

\keywords{Group activity selection problems, social networks, parameterized complexity}

\section{Introduction}
\noindent
In mutliagent systems, agents form coalitions to perform tasks. A useful model for analyzing how tasks can 
be allocated to groups of agents is the {\em group activity selection problem} (\GASP), proposed by 
Darmann et al.~\cite{Darmann2012}. In \GASP s, participants express preferences over 
pairs of the form (activity, group size).
The activities are then assigned to participants so as to achieve the best performance for
the whole system as well as to satisfy individual agents. 
The key idea behind this formulation is that ideal group size depends on the task at hand: 
in a company, an ideal size of the sales team may differ from that of a web developers' team.

However, there is one important feature missing from the standard \GASP\
model, namely the {\em feasibility} of resulting groups. In 
many real-life scenarios, smooth communication among members of a group is crucial in order for different individuals to 
work together, and hence one needs to take into account communication structures among agents. For instance, a group of 
employers are unable to realize their full potential if no agent knows each other. Nevertheless, the basic \GASP\ framework 
imposes no restrictions on how agents can split into different groups.

A succinct way to capture such restrictions is to represent communication structures by undirected graphs. This idea 
dates back to cooperative games with graph structure, proposed by Myerson~\cite{Myerson1977}. Under Myerson's proposal, 
nodes of the graph correspond to players and edges represent communication links between them.
Recently, Igarashi et al.~\cite{Igarashi2016b} considered group activity selection on social networks (\gGASP), where 
groups need to be connected in their underlying social network in order to achieve certain tasks.
The focus of that work was on core stability and Nash stability. In contrast with similar settings in 
cooperative games \cite{Elkind2014,Chalkiadakis2016,Igarashi2016a}, many of the computational problems were shown to be 
NP-complete for very simple network structures; in particular, deciding the existence of core and Nash stable outcomes 
was shown to be NP-complete even when the social network is either a path, a star, or has connected components of size at 
most four.

Motivated by the work of Igarashi et al.~\cite{Igarashi2016b}, we investigate the parameterized complexity of finding 
stable outcomes in group activity selection problems. A problem is fixed parameter tractable (FPT) with respect 
to a parameter $k$ if each instance $I$ of this problem can be solved in time $O(f(k)) \poly(|I|)$ where $f$ is 
a computable function. We show that the problem of deciding the existence of Nash stable outcomes for \gGASP s on 
acyclic graphs is fixed parameter tractable with respect to the number of activities, thereby solving a problem left 
open by Igarashi et al.~\cite{Igarashi2016b}. For general graphs, we obtain a W[1]-hardness result, 
implying that this problem is unlikely to admit an FPT algorithm; 
in fact, our hardness result holds even for \gGASP s on cliques (which correspond to the classic \GASP).

We then consider a weaker stability concept, namely, {\em individual stability}. In the context of hedonic games, it is 
known that an individually stable solution always exists and can be computed in polynomial time when the communication 
structure is acyclic~\cite{Igarashi2016a}. In contrast, we show that an instance 
of \gGASP\ may have no individually stable outcomes, even if the underlying network is a path.
We then analyze the complexity of computing individually stable solutions, 
and discover that from an algorithmic point of view 
individual stability is very similar to Nash stability.

\begin{table*}[t] 
	\footnotesize
	\caption{Overview of our complexity results. `NS' stands for Nash stability, 
                 `IS' stands for individual stability, `CR' stands for core stability.
                 All W[1]-hardness results are accompanied by XP-membership proofs except 
                 for finding an individually stable assignment in {\sf GASP}s on cliques.
                 For all `XP'-entries, the question whether the problem is fixed-parameter tractable remains open.
                 $\diamondsuit$ indicates that the result is not directly stated, but follows indirectly.
                 New results are listed in boldface font.}
	\tabcolsep=0.6mm
	\centering
	\begin{tabular}{llcccc}
		\toprule
		&& Complexity (general case) & few activities (FPT wrt $p$)& few players (FPT wrt $n$)&copyable activities \\
		\midrule
		\midrule
		NS
		& cliques & NP-c. \cite{Darmann2015} & \bf W[1]-h. (Th. \ref{thm:W1:clique:NS})& \bf W[1]-h. (Th. \ref{thm:W1players:clique}) & NP-c. \cite{Ballester2004}\\
		& acyclic & NP-c. \cite{Igarashi2016b}& \bf FPT (Th. \ref{thm:FPT:tree:NS}) & \bf XP (Obs. \ref{obs:XPplayers}) & poly time \cite{Igarashi2016b}\\
		& paths & NP-c.  \cite{Igarashi2016b}& FPT \cite{Igarashi2016b} & \bf XP (Obs. \ref{obs:XPplayers}) & poly time \cite{Igarashi2016b}\\
		& stars & NP-c.  \cite{Igarashi2016b}& FPT \cite{Igarashi2016b} & \bf XP (Obs. \ref{obs:XPplayers}) & poly time \cite{Igarashi2016b}\\
		& small components & NP-c.\cite{Igarashi2016b} &FPT \cite{Igarashi2016b}& \bf XP (Obs. \ref{obs:XPplayers}) & poly time \cite{Igarashi2016b}$\diamondsuit$\\
		\midrule
		IS 
		& cliques & NP-c. \cite{Darmann2015} & \bf W[1]-h. (Th. \ref{thm:W1:clique:IS})& \bf W[1]-h. (Th.  \ref{thm:W1players:clique}) & NP-c. \cite{Ballester2004}\\
		& acyclic & \bf NP-c. (Th. \ref{thm:NP:IS}) & \bf FPT (Th. \ref{thm:FPT:tree:IS})& \bf XP (Obs. \ref{obs:XPplayers}) & \bf poly time (Th. \ref{thm:IS:copyable})\\
		& paths & \bf NP-c. (Th. \ref{thm:NP:IS}) & \bf FPT (Th. \ref{thm:FPT:tree:IS})& \bf XP (Obs. \ref{obs:XPplayers}) & \bf poly time (Th. \ref{thm:IS:copyable})\\
		& stars & \bf NP-c. (Th. \ref{thm:NP:IS}) & \bf FPT (Th. \ref{thm:FPT:tree:IS})& \bf XP (Obs. \ref{obs:XPplayers}) & \bf poly time (Th. \ref{thm:IS:copyable})\\
		& small components & \bf NP-c. (Th. \ref{thm:NP:IS}) & FPT \cite{Igarashi2016b}$\diamondsuit$ & \bf XP (Obs. \ref{obs:XPplayers}) &  poly time \cite{Igarashi2016b}$\diamondsuit$\\
	\midrule
		CR
		& cliques & NP-c. \cite{Darmann2015} & \bf NP-c. for $\boldsymbol{p=4}$ (Th. \ref{thm:NP:clique:core})& \bf W[1]-h. (Th. \ref{thm:W1players:clique}) & NP-c. \cite{Ballester2004}\\	
		& acyclic & NP-c. \cite{Igarashi2016b} & \bf NP-c. for $\boldsymbol{p=2}$ (Th. \ref{thm:NP:star:core})& \bf XP (Obs. \ref{obs:XPplayers})& poly time \cite{Demange2004}\\
		& paths & NP-c. \cite{Igarashi2016b} & \bf XP (Cor. \ref{cor:core-xp-path})  & \bf XP (Obs. \ref{obs:XPplayers})&poly time \cite{Demange2004}\\
		& stars & NP-c. \cite{Igarashi2016b} & \bf NP-c. for $\boldsymbol{p=2}$ (Th. \ref{thm:NP:star:core}) & \bf XP (Obs. \ref{obs:XPplayers})& poly time \cite{Demange2004}\\
		& small components & NP-c. \cite{Igarashi2016b} &FPT \cite{Igarashi2016b}& \bf XP (Obs. \ref{obs:XPplayers})& poly time \cite{Igarashi2016b}$\diamondsuit$\\		
		\bottomrule
	\end{tabular}
	\vspace{-5pt}
	\label{table}
\end{table*}

Unfortunately, our FPT results do not extend to core stability: we prove that checking the existence of core stable 
assignments is NP-complete even for \gGASP s on stars with two activities; for classic \GASP, we can prove
that this problem is hard if there are at least four activities.
On the other hand, if there is only one activity, 
a core stable assignment always exists and can be constructed efficiently, for any network structure.

Another parameter we consider is the number of players. Somewhat surprisingly, we 
show that this parameterization does not give rise to an FPT algorithm for \gGASP s on general networks. 
Specifically, for all stability notions we consider, 
it is W[1]-hard to decide the existence of a stable outcome even when the underlying graph is a clique. 
We summarize our complexity results in Table~\ref{table}.


\section{Model}
\noindent
For $s,t\in{\mathbb N}\cup\{0\}$ where $s \leq t$, let $[s]=\{1,2,\ldots,s\}$ and $[s,t]=\{s,s+1,\ldots,t\}$. 
We start by defining a group activity selection problem with a graph structure~\cite{Igarashi2016b}.

\begin{definition}[\gGASP]
An instance of the {\em Group Activity Selection Problem with graph structure (\gGASP)} is given by 
a finite set of {\em players}, or {\em agents}, $N=[n]$,
a finite set of {\em activities} $A=A^{*}\cup\{a_{\emptyset}\}$ where 
$A^{*}=\{a_{1},a_{2},\ldots,a_{p}\}$ and $a_{\emptyset}$ is the {\em void activity}, 
a {\em profile} $(\succeq_{i})_{i \in N}$ of complete and transitive preference relations 
over the set of {\em alternatives} $X=A^{*} \times [n]\cup \{(a_{\emptyset},1)\}$, 
as well as a set of communication links between players 
$L \subseteq \{\, \{i,j\} \mid i,j\in N \land i\neq j \,\}$. 
\end{definition}
In what follows, we will write $x \succ_i y$ or $i: x\succ y$ to indicate that player
$i$ strictly prefers alternative $x$ to alternative $y$;
similarly, we will write $x\sim_i y$ or $i: x\sim y$ if $i$ is indifferent between $x$ and $y$.
Also, given two sets of alternatives $X, Y$ and a player $i$, we write
$X\succ_i Y$ to indicate that $i$ is indifferent among all alternatives in $X$
as well as among all alternatives in $Y$, and prefers each alternative in $X$
to each alternative in $Y$.

Two non-void activities $a$ and $b$ are said to be {\em equivalent} if for each player $i \in N$ and every $\ell \in [n]$ 
it holds that $(a,\ell) \sim_{i} (b,\ell)$. A non-void activity $a \in A^{*}$ is called {\em copyable} 
if $A^{*}$ contains at least $n$ activities that are equivalent to $a$ (including $a$ itself). 
Player $i \in N$ is said to {\em approve} an alternative $(a,k)$ if $(a,k) \succ_{i} (a_{\emptyset},1)$. 
When describing a player's preferences, by convention we only list the alternatives that she approves
as well as $(a_{\emptyset}, 1)$.

An outcome of a \gGASP\ is a {\em feasible assignment} of activities $A$ to players $N$, 
i.e., a mapping $\pi:N \rightarrow A$ where for each $a \in A^*$
the set $\pi^a=\{\, i \in N \mid \pi(i)=a \,\}$ of players 
assigned to $a$ is connected in $(N, L)$. Note that we place no constraints 
on the set of players who can simultaneously
engage in the void activity (i.e., do nothing).
For $i \in N$ with $\pi(i)\neq a_\emptyset$, we let $\pi_{i}=\{i\}\cup\{\, j \in N \mid \pi(j)=\pi(i)\}$ 
denote the set of players assigned to the same activity as player $i \in N$; 
for $i \in N$ with $\pi(i)= a_\emptyset$, we set $\pi_i=\{i\}$.

A feasible assignment $\pi:N \rightarrow A$ of a \gGASP\ is {\em individually rational} (IR) if each player weakly prefers 
her own activity to doing nothing, i.e.$(\pi(i),|\pi_{i}|) \succeq_{i} (a_{\emptyset},1)$ for all $i \in N$.

A subset $S\subseteq N$ of players is called a {\em coalition}, and said to be {\em feasible} if it is connected in 
the graph $(N,L)$.

A feasible coalition $S \subseteq N$ and an activity $a \in A^{*}$ {\it strongly block} 
an assignment $\pi:N \rightarrow A$ if
$\pi^a\subseteq S$ and $(a,|S|) \succ_{i} (\pi(i),|\pi_{i}|)$ for all $i \in S$.
A feasible assignment $\pi:N \rightarrow A$ of a \gGASP\ is called {\em core stable} (CR) 
if it is individually rational, and 
there is no feasible coalition $S \subseteq N$ and activity $a \in A^{*}$ 
such that $S$ and $a$ strongly block $\pi$.
Given a feasible assignment $\pi:N \rightarrow A$ of a \gGASP, a player $i \in N$ is said to have 
\begin{itemize}
\item an {\em NS-deviation} 
to activity $a \in A^{*}$ if $\pi^a\cup\{i\}$ is connected, 
and $i$ strictly prefers to join the group $\pi^a$, i.e., $(a,|\pi^a|+1) \succ_{i} (\pi(i),|\pi_{i}|)$.
\item an {\em IS-deviation} if it is an NS-deviation, and all players in $\pi^a$ accept it, 
i.e., $(a,|\pi^a|+1) \succeq_{j} (a,|\pi^a|)$ for all $j \in \pi^a$.
\end{itemize}
A feasible assignment $\pi:N \rightarrow A$ of a \gGASP\ is called {\em Nash stable} (NS) 
(respectively, {\em individually stable} (IS)) 
if it is individually rational and no player $i\in N$ has an NS-deviation (respectively, an IS-deviation) 
to some $a \in A^{*}$.

\section{Nash Stability}\label{sec:NS}
\noindent
The first stability concept we will consider is Nash stability. 
The following example, due to Igarashi et al.~\cite{Igarashi2016b}, 
shows that Nash stable outcomes may not exist 
even in very simple instances of \gGASP.

\begin{example}[Stalker game]\label{ex:NS:empty}
{\em
We consider an instance of \gGASP\ with two players: one is a loner and another is a stalker. 
The loner only approves alternatives of the form $(a, 1)$, where $a\in A^*$,
and the stalker only approves alternatives of the form $(a, 2)$, where $a\in A^*$.
Clearly, no assignment for this instance is Nash stable: if a loner engages
in an activity alone, the stalker would wish to join, and if the loner and the stalker
are together, the loner prefers to deviate to~$a_\emptyset$.
}
\end{example}

Igarashi et al.~\cite{Igarashi2016b} show that it is NP-complete 
to decide if a given instance of \gGASP\ admits a Nash stable outcome,
even if the underlying graph $(N, L)$ is a path or a star;
on the other hand, they demonstrate that for paths and stars
the problem of finding a Nash stable outcome is fixed-parameter
tractable with respect to the number of activities. 
We will now show that this FPT result extends to arbitrary acyclic networks.

\begin{theorem}\label{thm:FPT:tree:NS}
The problem of deciding whether an instance of $\gGASP$ with $|A^*|=p$ 
whose underlying social network $(N, L)$ is acyclic
has a Nash stable feasible assignment and finding one if it exists
is in {\sc FPT} with respect to $p$.
\end{theorem}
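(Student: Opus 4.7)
The plan is to root the forest at an arbitrary vertex of each tree and run a bottom-up dynamic program. Two structural observations drive the approach. First, in an acyclic graph each non-void activity's preimage $\pi^a$ is a subtree with a unique top vertex $r_a$, so we may think of $a$ as ``closing'' at $r_a$ during the bottom-up sweep, at which point $|\pi^a|$ becomes fixed. Second, any two vertex-disjoint connected subtrees of a tree have at most one tree edge between them (otherwise we would get a cycle); hence every boundary between distinct activities sits on a single tree edge and can be checked locally at that edge. In particular, a player $j\in T_v\setminus\{v\}$ has all its neighbours inside $T_v$, so its only possible NS-deviation targets are activities placed inside $T_v$ or globally unused activities.

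To handle globally unused activities in a purely local manner, I would first guess the set $A'\subseteq A^*$ of activities that appear in the desired assignment (this contributes the factor $2^p$). For each guess, the DP table at node $v$ is indexed by a tuple $(a,s,t,U)$ where $a=\pi(v)\in A$, $s\in[n]$ is the \emph{promised} final size of $\pi^a$ (allowed to exceed what has been seen in $T_v$ if $a$ will extend above $v$), $t=|\pi^a\cap T_v|$ is the current size, and $U\subseteq A^*$ records the activities that have already closed strictly within $T_v$. The entry is ``yes'' iff there is a feasible assignment on $T_v$ consistent with $(a,s,t,U)$ and for which every stability condition that is locally checkable is satisfied. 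Transitions combine one child $c$ at a time: if $a_c=a_v$ and $s_c=s_v$, the child's group extends upward (its $t_c$ is added to $v$'s running $t$, and the $U$-sets merge); otherwise $c$'s activity closes at $c$, which forces $t_c=s_c$, adds $a_c$ to $U$, and forbids duplicates across siblings (since an activity cannot appear twice within a tree). At each such boundary edge $(v,c)$ we verify the two local preference conditions $(a_c,s_c+1)\preceq_v(a_v,s_v)$ and $(a_v,s_v+1)\preceq_c(a_c,s_c)$, with the obvious adaptation when one endpoint uses $a_\emptyset$. At node $v$ itself we also enforce individual rationality $(a_v,s_v)\succeq_v(a_\emptyset,1)$ and the condition $(a',1)\preceq_v(a_v,s_v)$ for every $a'\in A^*\setminus A'$. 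Finally, at the root $r$ of each tree we require $t_r=s_r$ and, taken across all trees, $\bigcup_r(U_r\cup\{a_r\})=A'$.

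The main obstacle will be verifying that this collection of local conditions captures \emph{every} potential NS-deviation. This relies on (i) the single-edge property, which lets boundary checks at one tree edge stand in for all cross-activity deviations between any two activity subtrees; (ii) the observation that a player $j\in T_v\setminus\{v\}$ has no neighbour outside $T_v$, so any NS-deviation of $j$ not covered by a boundary check must target a globally unused activity, which is handled via the guess of $A'$; and (iii) a careful treatment of promised-versus-actual sizes, where the promise $s$ is the value that boundary and IR checks rely on throughout, and the equality $t_r=s_r$ at each closure forces the promise to match reality. Once these points are justified, each node carries $O(|A|\cdot n^2\cdot 2^p)$ DP entries with polynomial-time transitions, and together with the $2^p$ outer guesses of $A'$ the total running time is $2^{O(p)}\cdot\poly(n)$, yielding the desired FPT bound.
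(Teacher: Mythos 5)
Your proposal is correct, and its skeleton matches the paper's proof: both root the tree, sweep bottom-up, and index the table by the current activity, a promised final coalition size, the number of coalition members accumulated so far in the subtree, and the set of activities consumed there (your outer guess of $A'$ plays the role of the paper's table index $B$, against which deviations to globally unused activities engaged alone are checked). Where you genuinely diverge is the child-combination step. The paper guesses, for each table entry, a partition $\calP$ of the activities to be distributed among the children's subtrees, matches parts to children by color-coding, and derandomizes with $k$-perfect hash families; you instead merge children one at a time, carrying the set $U$ of already-consumed activities and insisting on disjointness, which costs a $3^p$ factor per merge but is entirely elementary and avoids randomization and hashing. Both routes give $2^{O(p)}\cdot\poly(n)$. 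Your soundness argument---that two disjoint connected subtrees of a tree meet in at most one edge, so every cross-coalition NS-deviation is witnessed at a single tree edge and can be checked there against the promised sizes, with the closure condition $t=s$ guaranteeing that promises are eventually honoured---is exactly what the paper's ``separation'' conditions implement, just stated more transparently. Two details to tighten: for forests you must additionally require the sets $U_r\cup\{a_r\}$ to be pairwise disjoint across trees (a non-void activity split over two components would be disconnected), not merely that their union equals $A'$; and the void activity must be exempted from the closure and duplication bookkeeping, as you indicate. Finally, calling the transitions ``polynomial-time'' undersells the $3^p$ disjoint-pair enumeration in each merge, though your stated overall bound correctly absorbs it.
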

\begin{proof}
We will first present a proof for the case where $(N, L)$ is a tree; in the end,
we will show how to extend the result to arbitrary forests.
Fix an instance $(N, A, (\succeq_i)_{i\in N}, L)$ of \gGASP\ such that $(N, L)$ is a tree.
We choose an arbitrary node in $N$ as the root of this tree, thereby making $(N, L)$
a rooted tree; we denote by $C(i)$ the set of children of $i$ and by $D(i)$
the set of descendants of $i$ (including $i$ herself).

We process the nodes from the leaves to the root.
For each $i\in N$, each $B\subseteq A^*$, each $B^\prime\subseteq B$,
each $(a,k) \in B^{\prime} \times [n] \cup \{(a_{\emptyset},1)\}$ and each $t\in[k]$,
we let $f_{i}(B,B^{\prime},(a,k),t)$ be {\em true} if there is an assignment $\pi:N\to A$
where 
\begin{itemize}
\item the set of activities assigned to players in $D(i)$ is exactly $B^{\prime}$;
\item player $i$ is assigned to $a$ and is in a coalition with $k$ other players;
\item exactly $t$ players in $D(i)$ belong to the same group as $i$;
\item the $t$ players in $D(i)\cap \pi_i$ 
      weakly prefer $(a,k)$ to $(b, 1)$ for each $b\in A\setminus B$, 
      and have no incentive to deviate to the other groups, 
      i.e., every player in $D(i)\cap \pi_i$ whose children do not belong to $\pi_i$ likes $(a, k)$ 
      at least as much as each of the alternatives she can deviate to; 
\item the players in $D(i)\setminus \pi_i$ weakly prefer their alternative under $\pi$ 
      to engaging alone in any of the activities in $A\setminus B$, 
      have no NS deviation to activities in $B^{\prime}\setminus \{a\}$, 
      and have no incentive to deviate to $i$'s coalition, i.e., if $a \neq a_{\emptyset}$, 
      then every player $j\in D(i)\setminus \pi_i$ whose parent belongs to $\pi_i$ 
      likes $(\pi(j), |\pi_j|)$ at least as much as $(a,k+1)$.
\end{itemize}
Otherwise, we let $f_{i}(B,B^{\prime},(a,k),t)$ be \emph{false}.

By construction, our instance admits a Nash stable assignment
if and only if $f_{r}(B,B^{\prime},(a,k),k)$ is {\em true}
for some combination of the arguments $B, B^\prime$, and $(a, k)$,
where $r$ is the root.

If $i$ is a leaf, we set $f_{i}(B,B^{\prime},(a,k),t)$ to \emph{true} if 
$B^{\prime}=\{a\}$, $t=1$, and $i$ weakly prefers $(a,k)$ to every alternative $(b, 1)$ 
such that $b\in A\setminus B$; otherwise, we set $f_{i}(B,B^{\prime},(a,k),t)$ to \emph{false}.
Now, consider the case where $i$ is an internal vertex. 
We first check whether $i$ strictly prefers some alternative $(b,1)$ such that $b\in A\setminus B$ to $(a,k)$; 
if so, we set $f_{i}(B,B^{\prime},(a,k),t)$ to \emph{false}. Otherwise, we proceed and check for each partition $\calP$ 
of $B^{\prime} \setminus \{a\}$ whether there is an allocation of each activity set $P\in \calP$ 
to some subtree rooted at $i$'s child that gives rise to an assignment with the conditions described above. 

We do this by using the color-coding technique; we `color' each child of $i$ using colors from $\calP$ independently 
and uniformly at random.
Suppose that $\pi$ is an assignment satisfying the properties described above where each activity set $P\in \calP$ 
is assigned to $D(j)$ for some $j\in C(i)$; we denote by $S$ the set of $i$'s children whose subtrees 
are assigned to some $P\in \calP$. Then, the probability that the subtrees associated with $S$ 
are assigned to these activities by a coloring $\chi$ chosen at random is $|\calP|^{-|\calP|}$, 
since there are $|\calP|^{|C(i)|}$ possible colorings, and $|\calP|^{|C(i)|-|\calP|}$ of them coincide with $\pi$ on $S$. 
We can then derandomize our algorithm using a family of $k$-perfect hash functions~\cite{Alon1995}.

Now, let us fix a coloring $\chi:C(i) \rightarrow \calP$. For each $P \in \calP$, 
we denote by $C_{P}=\{\, j \in C(i) \mid \chi(j)=P\,\}$ the set of $i$'s children of color $P$. 
We seek to assign each subtree rooted at the $j$-th element of $C(i)$ 
to the activities $\chi(j)\cup \{a, a_{\emptyset}\}$ in such 
a way that exactly one subtree of each color $P \in \calP$ uses all the activities in $P$.
We will show that there exists an efficient algorithm that finds an assignment compatible with $\chi$, 
or determines that no such assignment exists.

We will first determine for each color $P \in \calP$, each $j \in C_P$, and each $\ell\in[0, k-1]$ whether the subtree 
rooted at $j$ can be assigned to the activity set $P$, and exactly $\ell$ players in the subtrees of $C_P$ can be assigned 
to $a$; we refer to this subproblem by $f_P(j,\ell)$.
We initialize $f_P(j,\ell)$ to {\em true} if 
\begin{itemize}
\item $\ell=0$, and $i$ and $j$ can be {\em separated} from each other, that is, there exists an alternative $(b,\ell)\in P 
\times [n]\cup \{(a_{\emptyset},1)\}$ such that (i.) $f_{j}(B,P,(b,\ell),\ell)$ is {\em true}, (ii.) $b=a_{\emptyset}$ or $i$ weakly 
prefers $(a,k)$ to $(b,\ell+1)$, and (iii.) $a=a_{\emptyset}$ or $j$ weakly prefers $(b,\ell)$ to $(a,k+1)$; or
\item $\ell\geq 1$, and $f_{j}(B,P\cup \{a\},(a,k),\ell)$ is {\em true}.
\end{itemize}
We set $f_P(j,\ell)$ to {\em false} otherwise. 
Then, we iterate through all the subtrees associated with players in $C_P\setminus \{j\}$ 
and update $f_{P}(j,\ell)$: for each $j^{\prime} \in C_P \setminus \{j\}$ 
and for $\ell=k-1,k-2,\ldots,0$, we set $f_{P}(j,\ell)$ to {\em true} if
\begin{itemize}
\item $\ell \geq 2$, and, moreover, there exists an $x \in [\ell]$ such that both $f_{j^{\prime}}(B,\{a\},(a,k),x)$ and 
$f_{P}(j,\ell-x)$ are {\em true}; or
\item both $f_{j^{\prime}}(B,\emptyset,(a_{\emptyset},1),1)$ and 
$f_{P}(j,\ell)$ are {\em true}, 
and $a=a_{\emptyset}$ or player $j^{\prime}$ weakly prefers $(a_{\emptyset},1)$ to $(a,k+1)$;
\end{itemize}
we set $f_{P}(j,\ell)$ to {\em false} otherwise.

In a similar manner, we determine whether exactly $\ell$ successors of $i$ can be assigned to activity $a$; 
we denote this problem by $f(\ell)$. We initialize $f(1)$ to {\em true} and $f(\ell)$ to {\em false} 
for each $\ell \in [2, t]$. We iterate through all the colors in $\calP$ and update $f(\ell)$ one by one; 
for each color $P \in \calP$ and each number $\ell=t,t-1,\ldots,1$, we set $f(\ell)$ to {\em true} 
if there exists $x \in [0, t-1]$ such that $f_P(j,x)$ is true for some $j \in C_P$ 
and $f(\ell-x)$ is {\em true}, and we set $f(\ell)$ to {\em false} otherwise. 
Finally, we reject the coloring $\chi$ if $f(t)$ is {\em false}. 
Clearly, if there exists a Nash stable feasible assignment that is compatible with $\chi$, 
then the algorithm does not reject the coloring. We omit the proof for the bound on the running time.

Now, if $(N, L)$ is a forest, we can combine the algorithm described above with the algorithm
for graphs with small connected components proposed by Igarashi et al.~\cite{Igarashi2016b}.
The running time of the latter algorithm is a product of the time required to find a Nash stable assignment
for a single connected component and the time required to combine solutions for different components;
in the analysis of Igarashi et al., the former is $O(p^c)$, where $c$ is the maximum component size and the latter is $O(8^pn^3)$.
In our case, each connected component is a tree, so instead of the $O(p^c)$ algorithm for general graphs
we can use our FPT algorithm for trees. This shows that our problem is in FPT for arbitrary forests.
\end{proof}

It is unlikely that Theorem~\ref{thm:FPT:tree:NS} can be extended to general graphs
or even cliques: 
our next result shows that the problem of finding a Nash stable 
outcome is W[1]-hard with respect to the number of activities
even for `vanilla' \GASP, i.e., when the social network imposes no constraints
on possible coalitions.

\begin{theorem}\label{thm:W1:clique:NS}
The problem of determining whether an instance of \gGASP\ 
admits a Nash stable assignment 
is {\em W[1]}-hard with respect to the number of activities,
even if the underlying graph $G=(N, L)$ is a clique.
\end{theorem}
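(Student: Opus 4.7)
The plan is to reduce from \emph{Multicolored Clique}: given a graph $H=(V,E)$ whose vertices are partitioned into $k$ color classes $V_1,\ldots,V_k$, decide whether $H$ contains a clique using exactly one vertex per class. This problem is W[1]-hard with respect to $k$, and I would build an instance of \GASP\ (equivalently, a \gGASP\ instance whose graph is a clique) whose number of activities depends only on $k$. Concretely, I would introduce one \emph{pair-activity} $e_{ij}$ for each unordered pair $\{i,j\}\subseteq[k]$, giving $\binom{k}{2}$ such activities, plus a constant number of auxiliary activities to anchor the gadgets. Since the communication graph is a clique, every coalition is feasible, so the reduction can focus entirely on the design of preferences and group sizes.

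For each vertex $v\in V_i$ I would introduce a bundle of \emph{vertex-players} for $v$, one for each other color class $j\neq i$. Their preferences would approve only alternatives of the form $(e_{ij},s)$ for a carefully chosen set of sizes $s$, so that the size of the group engaged in $e_{ij}$ acts as a fingerprint identifying the pair $(v_i,v_j)$ of selected vertices. To force exactly one vertex per color class to be ``selected'', I would add stalker-style gadgets in the spirit of Example~\ref{ex:NS:empty}, tuned so that if two different vertices of $V_i$ both try to be simultaneously selected, or none of them is selected, a single player has a profitable NS-deviation. To enforce the adjacency constraint, I would add \emph{checker-players} for each non-edge $\{u,v\}\in\binom{V}{2}\setminus E$, whose only approved alternatives are tied to the specific pair-activity sizes produced when both $u$ and $v$ are selected; the checker-player would then be guaranteed an NS-deviation exactly in that configuration.

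After setting up the construction I would verify both directions. For the forward direction, given a multicolored clique $\{v_1,\ldots,v_k\}$ of $H$, I would describe a canonical assignment in which each pair-activity $e_{ij}$ is engaged by the vertex-players corresponding to $v_i$ and $v_j$, with sizes matching their preferred fingerprints; every vertex-player would then attain a top-ranked alternative, and every checker-player would see her triggering configuration avoided, so no NS-deviation exists. For the backward direction, I would argue that any Nash stable assignment must (i) induce a consistent implicit selection of one vertex per color class (otherwise some stalker gadget fires), (ii) produce pair-activity sizes that agree across the $k-1$ pair-activities sharing each color class $i$ (otherwise a vertex-player of the minority fingerprint deviates), and (iii) avoid all non-edges (otherwise the corresponding checker-player deviates), yielding a multicolored clique in $H$.

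The main obstacle will be designing the size-fingerprint encoding so that \emph{single-player} deviations, not merely coalitional ones, suffice to destroy every non-canonical assignment: Nash stability forbids only individual moves, so each gadget must be calibrated so that some lone player's approved alternatives become available exactly when the global structure fails. The delicate part is reconciling consistency across the $\binom{k}{2}$ pair-activities that share a color class while keeping the number of activities a function of $k$ alone and keeping vertex-player preference lists of polynomial length; this will require carefully chosen numerical gaps between the approved group sizes so that the fingerprints of distinct vertices never collide and so that every local inconsistency is witnessed by a unilaterally improving deviation.
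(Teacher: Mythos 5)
Your overall strategy matches the paper's in spirit: a parameterized reduction from a W[1]-hard clique-finding problem in which the activities are indexed by colors or color pairs (so their number is a function of $k$ alone), vertex identities are encoded in coalition \emph{sizes}, and stalker-style gadgets in the spirit of Example~\ref{ex:NS:empty} force the relevant activities to be used at the right sizes. The paper reduces from {\sc Clique} on regular graphs rather than {\sc Multicolored Clique}, uses $k$ vertex activities plus $k(k-1)/2$ edge activities plus one stalker activity, and---this is the main structural difference---enforces adjacency not by non-edge checkers but by a counting argument: each selected vertex $v$ has exactly $\delta$ edge players $e_{vu}$, of which $\delta-k+1$ are absorbed into $v$'s vertex coalition and the remaining $k-1$ must be absorbed by the $k(k-1)/2$ edge activities, each of which can host the two edge players of exactly one edge (by individual rationality and the size fingerprint $\beta(e)$). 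The resulting equality $k(k-1)=2\cdot k(k-1)/2$ forces every absorbed edge player to be matched with its partner, i.e., every selected edge to have both endpoints selected; the regularity of the input graph is essential to this count.

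The genuine gap in your plan is the consistency requirement that you yourself flag but do not discharge. Because you merge vertex and edge selection into the same $\binom{k}{2}$ pair activities, the identity of the vertex chosen for color $i$ is encoded redundantly in the sizes of $k-1$ different coalitions, and you must exhibit a \emph{single player} with an NS-deviation whenever two of these coalitions encode different vertices of $V_i$. You assert that ``a vertex-player of the minority fingerprint deviates,'' but such a player is typically already sitting in a coalition at one of her approved alternatives, so Nash stability gives you no leverage unless you arrange that she strictly prefers the inconsistent target \emph{and} that the target's size after she joins is one she approves---which in turn perturbs that coalition's fingerprint and threatens the forward direction. Relatedly, one vertex-player per (vertex, color class) pair gives only $k-1$ bodies per vertex, which cannot realize distinct size fingerprints without dummy/filler players, and you also need a stabilizer-type player (like $g$ in the paper) to rule out coalition sizes that are not valid fingerprints of any pair; neither appears in your construction. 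None of this is obviously unfixable, but it is exactly the hard part of the proof, and the paper's architecture---a dedicated vertex activity per clique slot, a single coalition whose size determines the selected vertex, and adjacency derived from counting---is designed precisely so that no cross-activity consistency gadget is needed. If you keep your architecture, the consistency gadget must be worked out explicitly before the reduction can be considered complete.
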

\begin{proof}
We reduce from {\sc Clique} on regular graphs, which is known to be W[1]-hard 
(see e.g. Theorem 13.4 in the book of Cygan~\cite{Cygan2015}). 
Given a graph $G=(V,E)$ and an integer $k$, where $|V|=n$, $|E|=m$, and each vertex of $G$ has degree $\delta \geq k-1$, 
we create an instance of \gGASP\ whose underlying graph is a clique, as follows. 

We define the set of activities as
\[
A =A^{\prime}\cup B^{\prime}\cup\{x, a_\emptyset\},
\] 
where $|A^\prime|=k$, $|B^\prime|=k(k-1)/2$.
Notice that $|A^*|=|A|-1 = 1+k(k+1)/2$.

For each $v \in V$, we create a vertex player $v$, and for each edge $e=\{u,v\} \in E$, we create two edge players 
$e_{uv}$ and $e_{vu}$. 
Let $X=\{j(k+3)+2+\delta\mid j\in[n]\}$, and let $\alpha:V\to X$ be a bijection that assigns a 
distinct number in $X$ to each vertex $v\in V$. Note that $\alpha(u)<\alpha(v)$
implies that the intervals $[\alpha(u),\alpha(u)+k+1]$ and 
$[\alpha(v)-1,\alpha(v)+k]$ are disjoint.
Similarly, let $Y=\{1+2j\mid j\in[m]\}$ and 
let $\beta:E \rightarrow Y$ be a bijection that assigns a distinct number
in $Y$ to each edge $e\in E$.
For each $v\in V$ we construct a set of $\alpha(v)-\delta+k-2$ dummy players
$D_v$; 
similarly, for each $e\in E$ we construct a set of $\beta(e)-2$ dummy players
$D_e$.
We let
\[
N_{G}=
V \cup \bigcup_{v\in V} D_v \cup \bigcup_{e=\{u,v\} \in E}\{e_{uv},e_{vu}\} \cup D_e.
\]
Finally, we create five additional players $\{b_1, b_2, c_1, c_2, g\}$ and let
$$
N=N_{G}\cup \{b_1, b_2, c_1, c_2, g\};
$$ 
intuitively, $b_1, b_2$ and $c_1, c_2$ will form two instances
of the stalker game, with $g$ used to stabilize the latter.
Let $L$ be the set of pairs of distinct players of $N$. 
Note that $|A^*|$ depends on $k$, but not on $n$, and the size of our instance of \gGASP\ is bounded by $O(n^3+m^2)$.

We will now define the players' preferences.
Each vertex player $v \in V$ and the players in $D_v$ approve each 
alternative in $A^{\prime}\times [\alpha(v),\alpha(v)+k+1]$.
Each edge player $e_{vu}$ approves each alternative 
in $A^{\prime}\times [\alpha(v),\alpha(v)+k+1]$ as well as each alternative
in $B^{\prime}\times\{\beta(e)\}$, whereas its 
dummies only approve the alternatives in $B^{\prime}\times\{\beta(e)\}$.
All of these players are indifferent among all alternatives they approve.
The stabilizer $g$ approves each alternative of the form $(a, s)$ with $a \in A^{\prime}$, 
$s\in \bigcup_{v \in V}[\alpha(v)+2,\alpha(v)+k+2]$ and is indifferent among them; 
she also approves $(x,3)$, but likes it less than all other approved alternatives.
Also, we have
\begin{align*}
c_1: (x, 1)\sim (x, 3)\succ (a_\emptyset, 1), \ \
&c_2: (x, 2)\sim (x, 3)\succ (a_\emptyset, 1),\\
b_1: A^\prime\times\{1\}\succ (a_\emptyset, 1), \qquad
&b_2: A^\prime\times\{2\}\succ (a_\emptyset, 1).
\end{align*}

We will now argue that the graph $G$ contains a clique of size $k$ if and only if there exists a Nash stable assignment
for our instance of \gGASP. 
Suppose that $G$ contains a clique $S$ of size $k$. We construct an assignment $\pi$ as follows. 
We establish a bijection $\eta$ between $S$ and $A^\prime$, and for each $v\in V$
we form a coalition of size $\alpha(v)$ that engages in $\eta(v)$: this coalition consists
of $v$, all players in $D_v$, and all edge players $e_{vu}$ such that $u\not\in S$. 
Also, we establish a bijection $\xi$ between the edge set $\{\,\{u,v\}\in E\mid u,v \in S\,\}$ 
and $B^{\prime}$, and assign the activity $\xi(e)$ 
to the edge players $e_{uv}$, $e_{vu}$ as well as to all players in $D_e$. 
Finally, we set $\pi(c_1)=\pi(c_2)=\pi(g)=x$,
and assign the void activity to the remaining players.
We will now argue that the resulting assignment $\pi$ is Nash stable. 

Clearly, no player assigned to an activity in $A^\prime$ or $B^\prime$
wishes to deviate.
Now, consider a player $v\in N$ with $\pi(v)=a_\emptyset$;
by construction, $v$ only wants to join a coalition if it engages in an activity in $A^\prime$
and its size is in the interval $[\alpha(v)-1, \alpha(v)+k]$, and no such coalition
exists. The same argument applies to players in $D_v$.
Similarly, consider an edge player $e_{vu}$ with $\pi(e_{vu})=a_\emptyset$.
We have $u, v\not\in S$, and therefore $e_{vu}$ does not want to join
any of the existing coalitions; the same argument applies to all dummies of $e_{vu}$.
Further, $b_1$ and $b_2$ do not want to deviate
since each activity $a \in A^{\prime}$ is assigned to at least three players,
and players $c_1$ and $c_2$ do not want to deviate since they are allocated one of their top choices. 
Finally, the stabilizer $g$ does not want to deviate, since there is no coalition of size 
$s\in \bigcup_{v\in V}[\alpha(v)+1,\alpha(v)+k+1]$ that engages in an activity $a\in A^\prime$. 
Hence, $\pi$ is Nash stable.

Conversely, suppose that there exists a Nash stable feasible assignment $\pi$.
Notice that $\pi$ cannot allocate an activity $a \in A^{\prime}$ to $b_1$ or $b_2$, 
or leave it unallocated, since no such assignment can be Nash stable.
Thus, each activity in $A^\prime$ is allocated to a coalition 
whose size lies in the interval $[\alpha(v),\alpha(v)+k+1]$ 
for some $v \in V$.
Further, Nash stability implies that $\pi$ allocates $x$ to $c_1$, $c_2$ and $g$.
Now, if some activity $a \in A^{\prime}$ is assigned to $s$ players,
where $s\in[\alpha(v)+1, \alpha(v)+k+1]$ for some $v \in V$, 
the stabilizer $g$ would then deviate to that coalition;
hence, for each $a\in A^\prime$ we have $|\pi^a|=\alpha(v)$ for some $v\in V$. 
Now, let $S=\{v\in V\mid \alpha(v) = |\pi^a|\text{ for some }a\in A^\prime\}$.
By construction, $|S|=k$. We will show that $S$ is a clique. 

Consider a player $v\in S$, and let $a(v)$ be the activity assigned to $\alpha(v)$
players under $\pi$. We have $\pi(v)= a(v)$, since otherwise $v$ could deviate to $a$;
similarly, all players in $D_v$ are assigned to $a$.
The only other players who approve $(a(v), \alpha(v))$ are edge players $e_{vu}$.
Thus, $\delta-k+1$ such players must be assigned to $a(v)$, and the remaining 
$k-1$ of these players must be assigned to some activity $b\in B^\prime$,
since otherwise they would deviate to $a(v)$. 
Now, consider an edge player 
$e_{vu}$ with $\pi(e_{vu})=b$ for some $b\in B^\prime$. By individual rationality
we have $\pi^b=\beta(\{u, v\})$ and hence $\pi^b$ consists of
$e_{vu}$, $e_{uv}$ and all dummies of the edge $\{u, v\}$.

For each $v\in S$, consider the set of players 
$E_{\pi, v} = \{e_{vu}, e_{uv}\mid v\in S, \pi(e_{vu})\neq\pi(v)\}$.
The size of this set is $2(k-1)$, and we have argued that
all players in such sets need to be assigned to activities in $B^\prime$.
However, $|B^\prime|=k(k-1)/2$, and each activity in $B^\prime$
can be assigned to at most two edge players simultaneously. Thus, $|\bigcup_{v\in S}E_{\pi, v}|=k(k-1)$.
As each edge player $e_{uw}\in \bigcup_{v\in S}E_{\pi, v}$ can belong to at most two sets $E_{\pi, v}$, it follows that 
it belongs to exactly two such sets, namely, $E_{\pi, u}$ and $E_{\pi, w}$. Thus, both
$u$ and $w$ are in $S$, i.e., $S$ is a clique, which is what we wanted to prove.
\end{proof}

On the positive side, for \gGASP s on cliques, we can place the problem of finding a Nash stable assignment
in the complexity class XP with respect to the number of activities. However, it is not clear if this result
can be extended to general \gGASP s.

\begin{theorem}\label{thm:XP:NS}
There exist an algorithm that, given an instance $(N, A, (\succeq_i)_{i\in N}, L)$
of \gGASP\ with $|N|=n$, $|A|=p+1$ such that $(N, L)$ is a clique, 
determines whether it admits a Nash stable 
assignment in time $(n+1)^p\poly(n)$.
\end{theorem}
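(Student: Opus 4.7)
The plan is to exploit the fact that on a clique every subset of players forms a feasible coalition, so whether a candidate assignment $\pi$ is Nash stable depends only on the vector of group sizes $(s_1, \ldots, s_p)$ where $s_j = |\pi^{a_j}|$, together with which player ends up in which group; it does not matter how the players not engaging in $a_j$ are grouped among themselves. Since each $s_j$ lies in $\{0, 1, \ldots, n\}$, there are at most $(n+1)^p$ such size vectors, and the algorithm iterates through all of them.

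For a fixed candidate vector $s = (s_1, \ldots, s_p)$, I would decide in polynomial time whether some Nash stable assignment realizes these group sizes by solving a bipartite matching problem. Build a bipartite graph with players on one side and, on the other side, $s_j$ slots for each activity $a_j \in A^*$ together with $n - \sum_j s_j$ slots for $a_\emptyset$. Connect player $i$ to a slot of $a_j \in A^*$ if and only if $(a_j, s_j) \succeq_i (a_\emptyset, 1)$ (individual rationality) and $(a_j, s_j) \succeq_i (a_{j'}, s_{j'}+1)$ for every $j' \in [p] \setminus \{j\}$ (no NS-deviation to any other non-void activity); connect $i$ to a slot of $a_\emptyset$ if and only if $(a_\emptyset, 1) \succeq_i (a_{j'}, s_{j'}+1)$ for every $j' \in [p]$. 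Because on a clique the deviation possibilities of each player depend solely on $s$ (not on the identities of other players in each group), these local checks are both necessary and sufficient for Nash stability, so $s$ is realized by some Nash stable assignment if and only if this bipartite graph admits a perfect matching on the player side.

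A perfect matching can be computed in $\poly(n)$ time by any standard algorithm, giving the claimed $(n+1)^p \poly(n)$ total running time. I do not expect a genuine obstacle: the only point that needs some care is ensuring the deviation conditions are formulated correctly when $s_{j'} = 0$, in which case the deviator would face the alternative $(a_{j'}, 1)$, which is consistent with the uniform expression $(a_{j'}, s_{j'}+1)$ used above.
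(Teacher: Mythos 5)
Your proposal is correct and follows essentially the same route as the paper: enumerate the $(n+1)^p$ size vectors and, for each, reduce the realizability question to a polynomial-time bipartite matching / network flow feasibility check whose edges encode individual rationality and the absence of NS-deviations (which on a clique depend only on the size vector). If anything, your formulation with explicit $a_\emptyset$-slots and the condition $(a_\emptyset,1)\succeq_i (a_{j'},s_{j'}+1)$ is slightly more careful than the paper's flow construction, which leaves the check for players assigned to the void activity implicit.
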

\begin{proof}
For every mapping $f:A^* \rightarrow [0, n]$, we will check if there is a Nash stable assignment such that 
$|\pi^a|=f(a)$ for each $a \in A^*$. There are at most $(n+1)^p$ such mappings; hence, it remains to show that each 
check will take at most $\poly(n)$ steps.

Fix a mapping $f:A^* \rightarrow [0, n]$. We construct an instance of the network flow problem as follows. We 
introduce a source $s$, a sink $t$, a node $i$ for each player $i \in N$, and a node $a$ for each activity $a \in A^*$. 
We create an arc with unit capacity from the source $s$ to each player, and an arc with capacity $f(a)$ from node $a \in A^*$ 
to the sink $t$. Then, for each $i\in N$
we create an arc of unit capacity from player $i$ to an activity $a \in A^*$ if and only if $i$ 
weakly prefers $(a,f(a))$ to $(a_\emptyset, 1)$ and to all pairs of the form $(b, f(b)+1)$, where $b\in A^* \setminus 
\{a\}$. It can be easily verified that an integral flow of size $\sum_{a \in A^*}f(a)$ in this network corresponds to a 
Nash stable assignment where exactly $f(a)$ players are engaged in each activity $a \in A^*$. It remains to note 
that one can check in polynomial time whether a given network admits a flow of a given size.
\end{proof}

\section{Individual stability}
\noindent
We will next consider a less stringent stability requirement, namely, individual stability. 
Igarashi and Elkind~\cite{Igarashi2016a} showed that in the context of hedonic games, 
acyclicity is sufficient for individually stable outcomes to exist: 
an individually stable partition of players always exists and can be computed in polynomial time. 
In contrast, it turns out that for \gGASP s this is not the case:
an individually stable outcome may fail to exist even if the underlying social network is a path;
moreover, this may happen even if there are only three players and their preferences are strict.

\begin{example}\label{ex:empty:IS}
{\em
Consider a \gGASP\ with $N=\{1,2,3\}$, $A^{*}=\{x,y,z\}$, $L=\{\{1,2\},\{2,3\}\}$, 
where players' preferences are as follows:
\begin{align*}
1:&~ (b,2) \succ (a,1) \succ (c,3) \succ (c,2) \succ (c,1) \succ (a_{\emptyset},1)\\
2:&~ (c,3) \succ (c,2) \succ (a,2) \succ (b,2) \succ (b,1) \succ (a_{\emptyset},1)\\
3:&~ (c,3) \succ (a,2) \succ (a,1) \succ (a_{\emptyset},1)
\end{align*}

We will argue that each individually rational feasible assignment $\pi$ admits an IS-deviation.
Indeed, if $\pi(1)=a_\emptyset$ then no player is engaged in $c$ and hence player $1$
can deviate to $c$. Similarly, if $\pi(2)=a_\emptyset$ then no player is engaged in $b$
and hence player $2$ can deviate to $b$. There are $9$ individually rational feasible assignments
where $\pi(1)\neq a_\emptyset$, $\pi(2)\neq a_\emptyset$; for each of them we can find an IS deviation
as follows (we write $i\to x$ to indicate that player $i$ has an IS-deviation to activity $x$): 
\begin{enumerate}[label=(\arabic*)]
\item
$\pi(1)=a$, $\pi(2)=b$, $\pi(3)=a_{\emptyset}$: $1\to b$; 
\item
$\pi(1)=b$, $\pi(2)=b$, $\pi(3)=a_{\emptyset}$: $3\to a$; 
\item
$\pi(1)=b$, $\pi(2)=b$, $\pi(3)=a$: $2\to a$; 
\item
$\pi(1)=c$, $\pi(2)=a$, $\pi(3)=a$: $2\to c$; 
\item
$\pi(1)=c$, $\pi(2)=b$, $\pi(3)=a_\emptyset$: $3\to a$;
\item
$\pi(1)=c$, $\pi(2)=b$, $\pi(3)=a$: $2\to a$;
\item
$\pi(1)=c$, $\pi(2)=c$, $\pi(3)=a_{\emptyset}$: $3\to a$; 
\item
$\pi(1)=c$, $\pi(2)=c$, $\pi(3)=a$: $3\to c$; 
\item
$\pi(1)=c$, $\pi(2)=c$, $\pi(3)=c$: $1\to a$. 
\end{enumerate}
}
\end{example}

In Example~\ref{ex:empty:IS} all activities are distinct. 
On the other hand, if all activities are copyable, 
an individually stable outcome is guaranteed to exist. 
Moreover, 
we can adapt the result of Igarashi and Elkind~\cite{Igarashi2016a} for hedonic games
and obtain an efficient algorithm for computing an individually stable outcome.

\begin{theorem}\label{thm:IS:copyable}
Each instance of \gGASP\ where each activity $a \in A^{*}$ is copyable 
and $(N,L)$ is acyclic admits an individually stable feasible assignment; 
moreover, such as assignment can be found in polynomial time.
\end{theorem}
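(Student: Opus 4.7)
The plan is to reduce to the tree case and then construct an IS assignment via a bottom-up procedure, adapting the approach of Igarashi and Elkind~\cite{Igarashi2016a} for hedonic games on acyclic networks. Since an IS-deviation by player $i$ requires $\pi^a\cup\{i\}$ to be connected in $(N, L)$, and $(N, L)$ is a forest, such deviations never cross component boundaries; hence it suffices to produce an IS assignment on each tree component independently and take their disjoint union.

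Fix a tree component $T$ and root it at an arbitrary vertex $r$. Copyability supplies a crucial piece of flexibility: every coalition in the final assignment may be allocated its own fresh copy of whichever activity it chooses, so distinct coalitions never need to compete for the same ``activity slot''. Consequently, the problem reduces to choosing a partition of $T$ into connected subtrees together with, for each part $S$, a single activity $a_S\in A^*\cup\{a_\emptyset\}$, so that the resulting assignment is IR and admits no IS-deviation.

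To produce such a partition, initialize each player $i$ as a singleton engaged in her most preferred approved singleton alternative $(a^*_i, 1)$, or in $a_\emptyset$ if she approves no singleton. Then process the edges of $T$ in post-order: for the edge $\{p(i), i\}$, let $S_p$ and $S_c$ be the current coalitions containing $p(i)$ and $i$ respectively. Test whether there is an activity $a\in A^*$ such that every $j\in S_p\cup S_c$ weakly prefers $(a,|S_p|+|S_c|)$ to her current alternative and at least one member strictly prefers it; if so, merge $S_p$ and $S_c$ into a single coalition engaged in $a$. Repeat until no such beneficial merge remains.

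The main obstacle will be proving individual stability at termination. On a tree, any hypothetical IS-deviation by a player $j$ corresponds to $j$ crossing a unique edge to join an adjacent coalition, and by our greedy merge rule any such merge that is beneficial to both sides has already been performed. The delicate case is where $j$ comes from a coalition of size at least two, since her departure alters the state of the abandoned coalition; here the argument relies on the monotonicity built into the merge operation (coalitions only grow, and only when the merge is Pareto-improving) together with the tree structure to rule out ``circular'' deviations. Polynomial-time execution is immediate: there are at most $n-1$ edges to process, each check enumerates $O(p\cdot n)$ activity-size pairs, and the number of coalitions strictly decreases with every merge, so the procedure terminates after at most $n$ merges.
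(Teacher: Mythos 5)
Your high-level plan (reduce to tree components, exploit copyability so that coalitions never compete for activity slots, build the assignment bottom-up) matches the paper's, but the core of your algorithm --- merging \emph{whole coalitions} across an edge when the merge is a Pareto improvement for the union --- operates at the wrong granularity for individual stability, and this is a genuine gap. An IS-deviation is a move by a \emph{single} player $j$ into an adjacent coalition $S$ engaged in $a$: it requires $(a,|S|+1)\succ_j(\pi(j),|\pi_j|)$ and $(a,|S|+1)\succeq_i(a,|S|)$ for every $i\in S$; the coalition that $j$ abandons is not consulted. Your termination condition only guarantees that no whole-coalition merge is beneficial, which neither implies nor is implied by the absence of such single-player moves. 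Concretely, take the path with edges $\{1,2\}$ and $\{2,3\}$; suppose your algorithm first merges $\{2\}$ and $\{3\}$ into a coalition on $(b,2)$, and the subsequent merge of $\{1\}$ (on $(a,1)$) with $\{2,3\}$ is blocked because player $3$ (or player $1$) dislikes every alternative of size $3$. Your procedure halts, yet player $2$ may strictly prefer $(a,2)$ to $(b,2)$ while player $1$ weakly prefers $(a,2)$ to $(a,1)$: then player $2$ has an IS-deviation to $a$, leaving $\{3\}$ behind on $b$ (still a feasible, connected coalition, and under IS nobody asks player $3$'s permission). The ``delicate case'' you flag is exactly this one, and the monotonicity of your merges does not resolve it.

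The paper's algorithm avoids the problem by greedily executing IS-deviations themselves: when the subtree rooted at $i$ is processed, $i$ first joins the most preferred activity to which she has an IS-deviation, and then players are added to $i$'s coalition \emph{one at a time}, each addition performed precisely when the joining player strictly prefers the move and every current member of $i$'s coalition accepts it; the bottom-up invariant from Theorem~1 of Igarashi and Elkind then shows that no deviation survives to the end. If you want to salvage your edge-merging scheme you would have to additionally allow a single player to peel off from her current coalition and cross an edge into the neighbouring one, but then coalitions no longer only grow and both your termination argument and your stability argument would need to be redone.
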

\begin{proof}[Sketch]
The algorithm is similar to the one for hedonic games~\cite{Igarashi2016a}. 
The basic idea is to create a rooted tree for each connected component 
and construct an assignment for every subtree in a bottom-up manner. 
When considering a subtree rooted at player $i$, we start with the assignment obtained by combining
the previously constructed assignments for the subtrees rooted at the children of $i$
and assigning $i$ to the void activity.
We then let player $i$ join the most preferred activity
among those to which she has an IS deviation. After that we keep adding  
players to $i$'s coalition as long as the resulting coalition remains feasible,
the player being added is willing to move, 
and such a deviation is acceptable for all members of $i$'s coalition. 
Similarly to the proof in Theorem~1 of Igarashi and Elkind~\cite{Igarashi2016a}, 
one can show that the resulting assignment is individually stable.
\end{proof}

In contrast, when activities are not copyable, finding an individually stable
feasible assignment is hard, even for very simple social networks.
Indeed, for every class of simple social networks for which Igarashi et al.~\cite{Igarashi2016b}
show that finding a Nash stable outcome is NP-complete, finding
an individually stable outcome is NP-complete as well. The proof of the following theorem 
is similar to the proofs of the respective results of 
Igarashi et al.~\cite{Igarashi2016b}: essentially, we have to replace each instance
of the stalker game in these proofs with an instance of the game from Example~\ref{ex:empty:IS}.

\begin{theorem}\label{thm:NP:IS}
Given an instance of \gGASP\ whose underlying graph is a path, a star, or 
has connected components whose size is bounded by a constant, 
it is {\em NP}-complete to determine whether it has an individually stable feasible assignment.
\end{theorem}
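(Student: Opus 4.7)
Membership in NP is routine: given a candidate assignment $\pi$, I can verify in polynomial time that $\pi$ is feasible, individually rational, and admits no IS-deviation by checking each (player, activity) pair in turn.

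For hardness, my plan is to reuse, essentially verbatim, the three reductions of Igarashi et al.~\cite{Igarashi2016b} that establish NP-completeness of the existence problem for Nash stability on paths, on stars, and on graphs with components of bounded size. Each of their constructions follows a uniform template: a ``main'' gadget encodes the source NP-hard problem, and a family of ``enforcement'' gadgets, each isomorphic to the stalker game from Example~\ref{ex:NS:empty}, is attached to it; the enforcement gadgets serve to forbid the wrong assignments by virtue of the stalker game having no Nash stable outcome. The difficulty in transferring these reductions to individual stability is that the stalker game does admit an IS outcome, namely, the one in which the loner plays alone and the stalker cannot join because the loner strictly prefers to be alone.

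The key step will be to replace each stalker-game gadget by a fresh copy of the three-player instance of Example~\ref{ex:empty:IS}, with its own private set of three activities and its own three players whose preferences over those activities are as in the example. Because Example~\ref{ex:empty:IS} has no individually stable outcome, the new gadget plays precisely the role of the stalker game in the original proofs. I will make the gadget players disapprove of every non-gadget activity, and have every non-gadget player disapprove of gadget activities; this isolates IS-deviations so that deviations across the gadget boundary are never an issue. Moreover, the NS-deviations exploited in the original correctness arguments are in fact IS-deviations, because the target coalitions either are empty or consist of players who strictly prefer to be joined, so the combinatorial portion of the argument transfers unchanged.

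The main obstacle I expect is topological compatibility of the gadget with each required network class. The Example~\ref{ex:empty:IS} gadget is itself a $P_3$, so for the path and bounded-components cases it can be spliced in directly, at the cost of raising the maximum component size from two to three, which is still a constant. The star case is more delicate: two non-central vertices of a star are non-adjacent, so the three gadget players cannot form a $P_3$ among themselves, and the critical IS-deviations must instead be routed through the star center. Designing a star-compatible variant of the gadget, or reusing the star center as a shared resource for all gadgets, and verifying that this variant still has no individually stable assignment without interfering with the main-structure analysis, is the principal technical task; once carried out, NP-hardness for all three network classes follows.
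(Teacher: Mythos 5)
Your proposal follows exactly the paper's argument: the paper's entire proof of this theorem consists of the remark that one takes the NP-hardness reductions of Igarashi et al.~\cite{Igarashi2016b} for Nash stability on paths, stars, and bounded-size components and replaces each stalker-game gadget with a fresh copy of the no-IS-outcome instance of Example~\ref{ex:empty:IS}, which is precisely your plan. Your concern about embedding the $P_3$ gadget into a star is legitimate but is resolved the same way the paper handles the analogous issue for core stability in Theorem~\ref{thm:NP:star:core}, namely by letting the star center play the role of the middle player of the three-player gadget.
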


Moreover, the problem of finding individually stable feasible assignments remains
hard even if the number of activities is small, and even if the social network is a clique.
The proof if similar to that of Theorem~\ref{thm:W1:clique:NS}
and is omitted due to space constraints.

\begin{theorem}\label{thm:W1:clique:IS}
The problem of determining whether an instance of \gGASP\ 
admits an individually stable assignment
is {\em W[1]}-hard with respect to the number of activities,
even if the underlying graph $G=(N, L)$ is a clique.
\end{theorem}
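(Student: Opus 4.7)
The plan is to adapt the reduction of Theorem~\ref{thm:W1:clique:NS} from \textsc{Clique} on regular graphs, keeping the same skeleton: vertex players, two edge players per edge, dummy sets $D_v$ and $D_e$ whose sizes are chosen via the injections $\alpha$ and $\beta$ so that the only way to simultaneously assign the $A'$- and $B'$-activities at their feasible sizes is to select a $k$-clique. The underlying graph again is the complete graph on the player set, and we preserve the property that $|A^*|$ depends only on $k$.

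The only substantial modification concerns the enforcement gadgets. In the Nash proof, the stalker pairs $(b_1,c_1)$ and $(b_2,c_2)$ (together with the stabilizer $g$) force each activity in $A'$ to be used at a size of the form $\alpha(v)$. Under individual stability, stalkers are ineffective: a lone player can veto a stalker from joining, so nothing prevents an activity from being left idle. My idea is to replace each stalker pair by a copy of the 3-player cyclic gadget from Example~\ref{ex:empty:IS}, whose preferences admit no individually stable outcome in isolation. I would wire each such gadget into the global instance by designating one of its activities to coincide with an activity in $A'$ (or a fresh helper activity) and tuning the sizes at which its players approve that activity so that the unique way to break the cycle is for the relevant $A'$-activity to be used in a coalition of exactly the correct size $\alpha(v)$. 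A scaled-up version of the stabilizer $g$, whose approved sizes are shifted by one from the target values, continues to forbid oversized coalitions.

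Verification splits into the usual two directions. For the forward direction, I would take the clique-based assignment from Theorem~\ref{thm:W1:clique:NS}, replace the stalker coalitions by the chosen escape resolutions of the Example~\ref{ex:empty:IS} gadgets, and check that every would-be IS-deviation is blocked either because the deviator does not improve or because some current member of the target coalition refuses. For the converse, any IS assignment must resolve each Example~\ref{ex:empty:IS} gadget; since each resolution consumes one $A'$-activity at a precise size and no two gadgets can share a resolution, a counting argument over the sets $E_{\pi,v}$ of off-clique edge players, identical to the one in Theorem~\ref{thm:W1:clique:NS}, forces the chosen sizes to encode a $k$-clique.

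The main obstacle is the asymmetry between NS- and IS-deviations: existing coalition members must consent, so achieving exact-size targets is substantially more delicate. Concretely, the danger is that the Example~\ref{ex:empty:IS} gadget is resolved by some spurious configuration unrelated to a clique (for instance, by a dummy player voluntarily entering the gadget and satisfying one of its members). Avoiding this requires engineering the preferences of the gadget players over $A'$ and $B'$ so that every escape route besides the clique-correct one is either infeasible, blocked by an unwilling incumbent, or creates a fresh IS-deviation elsewhere. Getting this coupling right, while keeping the number of distinct activities a function of $k$ alone, is the technical core of the proof.
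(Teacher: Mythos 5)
Your plan coincides with the paper's own construction (present in the source but suppressed in print): the two stalker pairs of Theorem~\ref{thm:W1:clique:NS} are replaced by two copies of the cyclic three-player gadget of Example~\ref{ex:empty:IS} --- one whose ``escape'' alternatives range over the $A'$-activities at small sizes, forcing every $A'$-activity to be used outside the gadget, and one on fresh activities pinned by a retained stabilizer that rules out oversized coalitions --- after which the counting argument over the sets $E_{\pi,v}$ goes through unchanged. So the proposal is correct and takes essentially the same route as the paper.
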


However, just as in the case of Nash stability, if we both restrict 
the structure of the social network and assume that the number of activities is small, 
we can obtain positive algorithmic results. The proof of the next theorem is similar 
to that of Theorem~\ref{thm:FPT:tree:NS}; again, we omit it due to space constraints.

\begin{theorem}\label{thm:FPT:tree:IS}
The problem of deciding whether an instance of $\gGASP$ with $|A^*|=p$
whose underlying social network $(N, L)$ is acyclic
has an individually stable feasible assignment and finding one if it exists
is in {\sc FPT} with respect to $p$.
\end{theorem}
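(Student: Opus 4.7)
The plan is to adapt the dynamic-programming and color-coding algorithm of Theorem~\ref{thm:FPT:tree:NS} by enlarging the DP state with one extra piece of information per coalition. As in that proof, I would first treat the case where $(N,L)$ is a single tree rooted at an arbitrary vertex, and then extend to forests exactly as in the last paragraph of the proof of Theorem~\ref{thm:FPT:tree:NS}.

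The only difference between Nash stability and individual stability is that an NS-deviation of a player $j$ into a coalition $\pi^b$ fails to be an IS-deviation precisely when some member of $\pi^b$ strictly prefers $(b,|\pi^b|)$ to $(b,|\pi^b|+1)$, and hence vetoes the newcomer; I will call such a member a \emph{protector} of $\pi^b$. My proposal is therefore to augment the NS table $f_i(B,B^\prime,(a,k),t)$ with a subset $B^\sharp\subseteq B^\prime$ recording which coalitions in $D(i)$ already contain a protector, together with a bit $q\in\{0,1\}$ indicating whether $\pi^a$ --- which may still grow through $i$'s parent --- already contains a protector among the members fixed so far. The entry $f_i(B,B^\prime,B^\sharp,(a,k,q),t)$ is set to true iff there is an assignment realising the first four arguments in which every would-be NS-deviation of a player in $D(i)$ to some activity $b$ is either unprofitable or vetoed (i.e.\ $b\in B^\sharp$ when $b\neq a$, or $q=1$ when $b=a$). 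Since $k$ is the final size of $\pi^a$, the property of being a protector of $\pi^a$ is well-defined the moment a given member is fixed.

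The recursion mirrors that of Theorem~\ref{thm:FPT:tree:NS} step by step. For a leaf $i$, we initialise $B^\sharp=\emptyset$ and set $q=1$ iff $i$ strictly prefers $(a,k)$ to $(a,k+1)$. For an internal vertex, we apply color-coding to partition $B^\prime\setminus\{a\}$ among the children, exactly as in the NS case. When a subtree rooted at a child $j$ does not join $\pi_i$, its protector information flows naturally into $B^\sharp$: we add $B^\sharp_j$ and, if $q_j=1$, also the activity $a_j$. When a subtree does join $\pi_i$, we OR-update $q$ by checking whether any newly absorbed member of $\pi_i$ is a protector of $(a,k)$. The feasibility checks on NS-deviations are then relaxed in the obvious way: deviations to activities in $B^\sharp$, and deviations to $a$ when $q=1$, are now permitted. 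Since the augmented state multiplies the NS table size by at most $2^{p+1}$, the overall algorithm remains FPT in $p$, with the same $k$-perfect-hash-family derandomisation of color-coding as in Theorem~\ref{thm:FPT:tree:NS}.

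The main obstacle I expect is correctly handling the protector status of player $i$'s own activity: $\pi^a$ may be rooted above $i$ in the tree, so its full membership is only finalised once the recursion reaches the root of that coalition. Carrying $q$ alongside $(a,k)$ and monotonically OR-ing in contributions as subtrees are merged into $\pi_i$ makes this work, but one has to be careful that the relaxed NS-deviation checks at each level use the value of $q$ that reflects only the members already fixed at that level; the remainder of the argument is a direct transcription of Theorem~\ref{thm:FPT:tree:NS}.
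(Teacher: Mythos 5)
Your general strategy---reusing the dynamic program and color-coding of Theorem~\ref{thm:FPT:tree:NS} and enriching the state with ``veto'' information---is the right one, and it is close to what the paper does. However, the specific bookkeeping you propose for the coalition containing the current node has a genuine completeness gap. You let $q$ be the \emph{running} indicator of whether a protector of $(a,k)$ is already among the members of $\pi^a$ fixed so far, and you insist that ``the relaxed NS-deviation checks at each level use the value of $q$ that reflects only the members already fixed at that level.'' This makes the algorithm sound but not complete: a would-be deviator into $\pi^a$ may sit \emph{below} the member who vetoes it. Concretely, take a path $w - u - r$ rooted at $r$, with $\pi^a=\{u,r\}$ and $k=2$, where $w$ strictly prefers $(a,3)$ to its own alternative, $u$ is indifferent between $(a,2)$ and $(a,3)$, and $r$ strictly prefers $(a,2)$ to $(a,3)$. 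The final assignment is individually stable ($r$ vetoes $w$), but when your recursion processes $u$ and separates $w$, the running value is $q=0$, so $w$'s deviation is neither unprofitable nor ``vetoed'' under your semantics and the entry is set to \emph{false}; the protector $r$ arrives too late to resurrect it. The same timing problem recurs for every coalition whose topmost node has not yet been reached, so it is not confined to the root coalition, and your $B^\sharp$ (which only becomes available once a child's coalition is sealed) does not repair it.

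The paper avoids this by \emph{not} constraining deviations into the still-open coalition inside the main table at all. Instead it carries two auxiliary tables alongside $f_i$: a table $g_i$ recording that some already-fixed member of $\pi^a\cap D(i)$ is a protector, and a table $h_i$ recording that \emph{no} player in $D(i)\setminus\pi_i$ even wants to join $\pi^a$. The requirement ``every would-be deviation into $\pi^a$ is unprofitable or vetoed'' is then enforced only at the moment the coalition is sealed (at the parent of its topmost node, or at the root), as the disjunction $g\lor h$; until that point both bits are propagated upward by OR/AND updates that mirror the NS recursion. To repair your proof you should either adopt this $g/h$ decoupling, or turn $q$ into a \emph{guess} of the final protector status of $\pi^a$ that is verified only when the coalition is sealed---in which case you must additionally remember whether any already-separated player wants to join, which is exactly the paper's $h$ bit. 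The rest of your outline (the treatment of activities in $B'\setminus\{a\}$ once their coalitions are sealed, the $2^{O(p)}$ blow-up of the state space, the color-coding and derandomization, and the reduction from forests to trees) is fine and matches the paper.
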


The results for individual stability presented so far indicate that from the complexity perspective
it is very similar to Nash stability. However, it is not clear if the XP algorithm
presented in Theorem~\ref{thm:XP:NS} extends to individual stability. The difficulty is that, 
to determine whether an agent $i$ has an IS deviation to an activity $a$, it is not sufficient
to know how many players engage in $a$: knowing their preferences is important 
to decide whether $i$'s deviation will be vetoed by one of the players currently assigned to $a$.
Another important difference concerns copyable activities and games on acyclic graphs:
in this setting, individually stable outcomes always exist (Theorem~\ref{thm:IS:copyable}),
whereas for Nash stable outcomes this is not the case, as illustrated by the stalker game.


\section{Core stability}
\noindent
Igarashi et al.~\cite{Igarashi2016b} have demonstrated that
the core can be empty even for \gGASP s on paths with $3$ players and $2$ activities.
We reproduce their example below.

\begin{example}[\cite{Igarashi2016b}]\label{ex:core:empty}
{\em
Consider a \gGASP\ with $N=\{1,2,3\}$, $A^{*}=\{a,b\}$, $L=\{\{1,2\},\{2,3\}\}$, 
where agents' preferences $(\succeq_{i})_{i \in N}$ are as follows:
\begin{align*}
1:&~ (b,2) \succ_{1} (a,3) \succ_{1} (a_{\emptyset},1)\\
2:&~ (a,2) \succ_{2} (b,2) \succ_{2} (a,3) \succ_{2} (a_{\emptyset},1)\\
3:&~ (a,3) \succ_{3} (b,1) \succ_{3} (a,2) \succ_{3} (a_{\emptyset},1)
\end{align*}
It can be shown that this instance admits no core stable assignment \cite{Igarashi2016b}.
}
\end{example}

On the positive side, we can show that checking whether a given feasible assignment
is core stable is easy, irrespective of the structure of the social network.
The proposition below generalizes Theorem~11 of Darmann~\cite{Darmann2015}
and Theorem~12 of Igarashi et al.~\cite{Igarashi2016b}.

\begin{proposition}\label{prop:in-core}
Given an instance \,\,$(N, (\succeq_i))_{i\in N}, A, L)$\,\,
of \gGASP\ and a feasible assignment $\pi$ for that instance, 
we can decide in polynomial time whether $\pi$ is core stable. 
\end{proposition}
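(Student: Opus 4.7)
The plan is to decouple the two defining conditions of core stability and handle them separately. Individual rationality is trivial to verify in $O(n)$ time by comparing each $(\pi(i),|\pi_i|)$ with $(a_{\emptyset},1)$, so the real content of the proposition lies in detecting a strongly blocking pair $(S,a)$ efficiently. I would enumerate the $O(pn)$ candidate pairs $(a,s)$, where $a\in A^{*}$ and $s$ is the prospective size of $S$, and, for each such pair, test in polynomial time whether there exists a feasible coalition of exactly $s$ players that strongly blocks $\pi$ together with $a$.

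The key reduction is to a purely graph-theoretic question. For a fixed pair $(a,s)$, let
\[
T_{a,s}=\{\,i\in N\mid (a,s)\succ_i (\pi(i),|\pi_i|)\,\}.
\]
Any strongly blocking coalition $S$ of size $s$ using $a$ must satisfy $\pi^a\subseteq S\subseteq T_{a,s}$ and be connected in $(N,L)$. Hence, if $\pi^a\not\subseteq T_{a,s}$ we may discard the pair immediately; otherwise, the question becomes whether there is a connected subset of $G[T_{a,s}]$ of size exactly $s$ that contains $\pi^a$.

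The main technical step is the observation that this existence question collapses to a simple size check on a single connected component of $G[T_{a,s}]$. Since $\pi^a$ is already connected in $(N,L)$ and lies inside $T_{a,s}$, it is contained in a unique component $C$ of $G[T_{a,s}]$; when $\pi^a=\emptyset$ we instead consider each component in turn. If $|C|\ge s$, then a blocking coalition of size $s$ can always be built: pick a spanning tree of $C$ that contains a spanning subtree of $\pi^a$ (which exists because $\pi^a$ is connected), and grow this subtree one leaf at a time until its vertex set has size $s$; each intermediate set is connected and sits inside $T_{a,s}$. Conversely, if $|C|<s$ no connected subset of $T_{a,s}$ containing $\pi^a$ has $s$ vertices.

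Putting the pieces together yields a polynomial algorithm: for each of the $O(pn)$ pairs $(a,s)$, compute $T_{a,s}$, identify the relevant component of $G[T_{a,s}]$ by BFS, and compare its size with $s$; declare $\pi$ core stable if and only if individual rationality holds and no pair $(a,s)$ admits a blocking coalition. The step I expect to be the main obstacle is the spanning-tree growth argument that justifies the reduction to a component-size comparison, since it is where the connectivity constraint of \gGASP\ must be handled carefully; once that observation is in place, the remaining bookkeeping is routine.
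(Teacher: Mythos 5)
Your proposal is correct and follows essentially the same route as the paper: for each pair $(a,s)$ you form the set of players who strictly prefer $(a,s)$ to their current alternative (the paper's $S_{a,s}$, your $T_{a,s}$) and reduce the existence of a strongly blocking coalition to checking whether the component of this set containing $\pi^a$ has at least $s$ vertices. The only difference is that you spell out the tree-growing argument that the paper leaves implicit.
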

\begin{proof}
Let $A=A^*\cup\{a_\emptyset\}$ and let $n=|N|$.
By scanning the assignment $\pi$ and the players' preferences, 
we can check whether $\pi$ is individually rational.
Now, suppose that this is the case. Then, for each $a\in A^*$
and each $s\in[n]$ we can check if there is a deviation
by a connected coalition of size $s$ that engages in $a$.
To this end, we consider the set $S_{a, s}$ of all players who strictly
prefer $(a, s)$ to their assignment under $\pi$ and verify
whether $S_{a, s}$ has a connected component of size at least $s$
that contains $\pi^a$; if this is the case, $\pi^a$ (which is itself connected or empty)
could be extended to a connected coalition of size exactly 
$s$ that strongly blocks $\pi$. If no such deviation exists, 
$\pi$ is core stable.  
\end{proof}

However, core stability turns out to be more 
computationally challenging that Nash stability and individual stability
when the number of activities is small: we will now show that 
core stable assignments are hard to find even 
if there are only two activities 
and the underlying graph is a star (and thus one cannot expect an FPT result
with respect to the number of activities for this setting).
Later, we will see that this hardness result can be extended to the case where $(N, L)$ is a clique, 
i.e., to classic \GASP, thereby solving a problem left open by the work
of Darmann~\cite{Darmann2015}.

\begin{theorem}\label{thm:NP:star:core}
It is {\em NP}-complete to determine whether an instance of \gGASP\ 
admits a core stable assignment even when the underlying 
graph is a star and the number of non-void activities is~$2$.
\end{theorem}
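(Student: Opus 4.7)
The plan is to establish NP membership through Proposition~\ref{prop:in-core}, which already provides a polynomial-time core-stability verifier, and then to prove NP-hardness by reducing from a natural NP-complete problem such as \textsc{Exact Cover by 3-Sets} (X3C). The reduction will hinge on the rigid topology of a star: on a star with center $c$, the only connected coalitions are singleton leaves, the singleton $\{c\}$, and sets of the form $\{c\}\cup L$ for $L\subseteq N\setminus\{c\}$. With only two non-void activities $a$ and $b$, at most one non-singleton coalition can exist per activity, and at most one of these contains $c$, so the whole combinatorial content of the source instance must be encoded by a single decision: which leaves sit with $c$ in whichever activity $c$ is assigned to.

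Given an X3C instance with ground set $U$ of size $3q$ and a collection $\mathcal{S}$ of $3$-subsets of $U$, I will construct a star whose leaves include one ``set leaf'' $\ell_j$ per $S_j\in\mathcal{S}$, one ``element leaf'' per $u\in U$, and a small number of calibration dummies. The center's preferences will be tuned so that in any individually rational assignment, $c$ strictly prefers engaging in $a$ with a coalition of size exactly $q+1$; the $q$ set leaves joining $c$ will then correspond to the selected sub-collection $\mathcal{S}'\subseteq\mathcal{S}$. Set-leaf preferences will be arranged so that each $\ell_j$ is happy to join $c$ in $a$ but is strictly better off as a singleton under $b$ when its set is not chosen. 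Element leaves will act as witnesses enforcing exact coverage: an element $u$ that is uncovered or multiply covered by the set leaves in $c$'s coalition will activate a small connected blocker---either a singleton element leaf switching to $b$, or an augmented $\{c\}\cup L'$ coalition in $b$---whose members all strictly prefer the deviation.

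The key steps, in order, are: (1) pin down the size and activity of $c$'s coalition through the center's own preferences; (2) align set-leaf preferences so that any individually rational coalition of the pinned size corresponds to a selection of exactly $q$ sets from $\mathcal{S}$; (3) calibrate element-leaf preferences so that every non-exact selection spawns a strong blocker; and (4) verify correctness by enumerating the very restricted classes of strong blockers that exist on a star with only two activities.

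The main obstacle is step~(4). With only two activities, the standard ``stabilizer'' trick used in Theorem~\ref{thm:W1:clique:NS}, where an additional activity soaks up potential threats, is no longer available, since flipping the center between $a$ and $b$ is effectively the only combinatorial lever in the construction. The preferences must therefore be layered so that when $\mathcal{S}'$ is an exact cover, no coalition $\{c\}\cup L'$ engaging in $b$ can make all its members strictly better off simultaneously, whereas any non-exact $\mathcal{S}'$ both breaks this equilibrium and exposes a blocker. Confirming that precisely the two failure modes---uncovered and multiply-covered elements---trigger blockers, and that no spurious blocker appears in the exact-cover case, will be the most delicate part of the argument.
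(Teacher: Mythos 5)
Your NP-membership step and the general architecture (star with center $c$, a large coalition around $c$ in activity $a$ encoding the selected sub-collection, and connected blockers through $c$ in activity $b$ punishing bad selections) match the paper, which reduces from \textsc{Hitting Set} rather than X3C. However, there is a genuine gap at the heart of your step (3): preferences in \gGASP\ are \emph{anonymous} --- a player's utility depends only on the pair (activity, group size) --- so your ``element leaves as witnesses enforcing exact coverage'' cannot work as described. If $c$'s coalition consists of $c$ plus $q$ set leaves (one leaf per chosen set), then whether the chosen sets cover some element $u$ zero, one, or two times leaves no trace whatsoever in the size of any coalition: the $a$-coalition has size $q+1$ in all three cases, and $u$'s leaf sits at the void activity in all three cases. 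No blocking coalition, whose members can only evaluate alternatives by size, can distinguish these situations, so neither the ``uncovered'' nor the ``multiply covered'' failure mode can activate a blocker. This is precisely why the paper inverts your roles: its large $a$-coalition contains \emph{element} players (a candidate hitting set), and each \emph{set} $W_i$ owns a potential blocking coalition $\{c\}\cup W_i\cup D_i$ in activity $b$, padded by dummies $D_i$ to a unique target size $t_i$ that only the members of $W_i$ approve. A set is ``hit'' exactly when one of its elements is already in the $a$-coalition and therefore refuses to join the block --- the blocking-coalition semantics of the core natively expresses the hitting condition (``every set contains a satisfied element''), whereas X3C's exactly-once condition has no such expression under anonymity.

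A secondary gap is that you never explain how to rule out degenerate stable outcomes in which $c$ forms a small coalition (or pairs with the wrong leaves), beyond saying the center's preferences ``will be tuned.'' The paper handles this by embedding the three-player empty-core instance of Example~\ref{ex:core:empty} on $c$, $s_1$, $s_2$, so that no assignment in which $c$'s coalition has size at most $3$ can be core stable; this forces $|\pi^a|\in[4,3k+3]$ and hence forces the intended structure. You correctly identify your step (4) as the delicate part, but the construction you outline does not survive it: I would recommend switching the source problem to \textsc{Hitting Set} (or restructuring the X3C reduction so that each set, not each element, is the owner of a uniquely-sized potential blocker containing its elements).
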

\begin{proof}
Our problem is in NP by Proposition~\ref{prop:in-core}.
To establish NP-hardness, we reduce from the NP-complete problem {\sc Hitting Set}~\cite{gj}. 
An instance of {\sc Hitting Set} is a family $\calH=\{V_1,V_2,\ldots,V_m\}$ of subsets of a finite set $V$ 
and an integer $k$ with $k<|V|$. 
It is a ``yes''-instance if $\calH$ admits a {\em hitting set} $V^{\prime}\subseteq V$ 
of size at most $k$, i.e., 
$|V^{\prime}| \leq k$ 
and $V^{\prime}\cap V_i\neq\emptyset$ for each $V_i\in\calH$;
otherwise, it is a ``no''-instance.

Given an instance of {\sc Hitting Set}, we can create three disjoint copies of $(V, \calH)$:
for each $v\in V$ we create elements $x_v$, $y_v$ and $z_v$, 
set $W=\{x_v, y_v, z_v\mid v\in V\}$ and for each $i\in[m]$ we let
$X_i=\{x_v\mid v\in V_i\}$, $Y_i=\{y_v\mid v\in V_i\}$, $Z_i=\{z_v\mid v\in V_i\}$
and set $\calG =\{X_i, Y_i, Z_i\mid i\in[m]\}$.
In what follows, we will use the fact that $(W, \calG)$ admits a hitting set of size $3k$ if and only if 
$(V, \calH)$ admits a hitting set of size~$3$. 

Consider an instance $(V, \calH, k)$ of {\sc Hitting Set} and construct the pair
$(W, \calG)$ as described above. For readability, we renumber the elements of $\calG$
as $W_1, \dots, W_{3m}$. 

We construct an instance of \gGASP\ as follows.
We define the set of activities to be $A^{*}=\{a,b\}$. 
We introduce a center player $c$, two players $s_1$ and $s_2$, and a player $w$ 
for each $w \in W$.
Also, for each 
$i\in[3m]$, we let $t_i=i+|W|+1$ 
and create a set $D_i=\{d^{(1)}_i,d^{(2)}_i,\ldots,d^{(t_i-|W_i|-1)}_i\}$ of dummy players.
We then attach $s_1$, $s_2$, each player $w\in W$
and each of the dummies to the center. 
Formally, the graph $(N,L)$ is given by
\[
N=\{c,s_1,s_2\}\cup W\cup \bigcup^{3m}_{i=1}D_i~\mbox{and}~L=\{\,\{c,x\} \mid x \in N\setminus \{c\} \,\}.
\]
Intuitively, for each $i\in[3m]$
the number $t_i$ is the target coalition size when all players 
in $W_i$ are engaged in activity $b$, together with $c$ and the players in $D_i$.

The agents' preferences over alternatives are defined as follows. 
We let $A^{\prime}=\{a\}\times[4, 3k+3]$. For each $w \in W$, we let 
$B_w=\{b\}\times |\{\, t_i\mid w \in W_i \,\}|$; also, set $B = \bigcup_{w\in W}B_w$.
The preferences of each player $w \in W$ are given by
\begin{align*}
&w:~A^{\prime}\succ B_w \succ (a_{\emptyset},1).
\end{align*}
For each $i\in[3m]$ the dummy players in  $D_i$ only approve the alternative $(b,t_i)$.
The preferences of the center player $c$ are given by
\begin{align*}
&c:~(a,2) \succ (b,2) \succ (a,3) \succ B \succ A^{\prime}\succ (a_{\emptyset},1).
\end{align*} 
Finally, the 
preferences of players $s_1$ and $s_2$ 
are given by
\begin{align*}
&s_{1}:~A^{\prime}\succ (b,2)  \succ (a,3) \succ (a_{\emptyset},1)\\
&s_{2}:~A^{\prime} \succ (a,3) \succ B \succ (b,1)\succ (a,2) \succ (a_{\emptyset},1).
\end{align*}
Note that the preferences of $c$, $s_1$ and $s_2$, when restricted to $A\times[1, 2, 3]$,
form an instance of \gGASP\ with an empty core (Example~\ref{ex:core:empty}).

We will show that $(V, \calH)$ admits a hitting set of size at most $k$ 
if and only if there exists a core stable feasible assignment.

Let $V^{\prime}$ be a hitting set of size at most $k$ in $(V, \calH)$.
Set $W^{\prime}=\{x_v, y_v, z_v\mid v\in V^{\prime}\}$; we have $|W^\prime|\le 3k$,
and, by construction, $W^{\prime}$ is a hitting set for $(W, \calG)$.
Then, we construct a feasible assignment 
$\pi$ by assigning activity $a$ to $c$, $s_1$, $s_2$, and the players 
$w \in W^{\prime}$, and assigning the void activity to the remaining players.
Clearly, $\pi$ is individually rational, since $(a,|W^{\prime}|+3) \in A^{\prime}$. Further, notice that no connected 
subset $S$ together with activity $a$ strongly blocks $\pi$:
every such subset has to contain players $s_1$ and $s_2$, who  
are currently enjoying one of their top alternatives. 
It remains to show that no connected subset $S$ together with activity $b$ strongly blocks $\pi$. Suppose 
towards a contradiction that such a subset $S$ exists; 
as $s_1$ and $s_2$ are not interested in deviating, it must be the case 
that $|S|=t_i$ for some $i\in [3m]$ and hence $S$ consists of agents who approve $(b, t_i)$, i.e., 
$S=\{c\}\cup W_i \cup D_i$ for some $i\in[3m]$. However, since $W^{\prime}$ is a hitting set, 
there is an agent $j\in W^{\prime}\cap W_i$ with $\pi(j)=a$, and this agent prefers $(a, |W^{\prime}|+3)$
to $(b, t_i)$, a contradiction.
Hence, $\pi$ is core stable.

Conversely, suppose that there exists a core stable feasible assignment $\pi$ and let 
$W^{\prime}=\{\, w \in W \mid \pi(w)=a \,\}$. 

We will first argue that $\pi(c)=a$.
Indeed, if $\pi(c)=a_\emptyset$, then $\pi^a=\emptyset$ and agents $c$, $s_1$ and $s_2$ can deviate to $a$.
If $\pi(c)=b$ and the only other agent to engage in $b$ is $s_1$, then $\pi^a=\emptyset$
and $c$ and $s_2$ can deviate to $a$. If $\pi(c)=b$ and $|\pi^b|=t_i$ for some $i\in [m]$
then $\pi^a=\emptyset$, and $c$, $s_1$ and $s_2$ can deviate to $a$.
It follows that $\pi(c)=a$.

Now, if $\pi^a=\{c, s_2\}$ then $\pi^b=\emptyset$ and agent $s_2$ can deviate to $b$.
Similarly, if $\pi^a=\{c, s_1, s_2\}$ then $\pi^b=\emptyset$, and $c$ and $s_1$ can deviate to $b$.
If follows that $|\pi^a|\in[4, 3k+3]$ and hence $|W^{\prime}|\le 3k+2$;
also, either $\pi^b=\{s_2\}$ or $\pi^b=\emptyset$.
Now, if $W^{\prime}$ is not a hitting set for $(W, \calG)$,
there exists a set $W_i\in \calG$ such that all players in $W_i$ are assigned to the void activity.
Then if $\pi^b=\emptyset$ the coalition $\{c\}\cup W_i \cup D_i$
together with the activity $b$ strongly blocks $\pi$, and
if $\pi^b=\{s_2\}$ the coalition $\{c, s_2\}\cup W_i \cup D_i\setminus \{d_i^{(1)}\}$ 
together with the activity $b$ strongly blocks $\pi$.
In either case, we obtain a contradiction with the stability of $\pi$.
Thus, $W^\prime$ is a hitting set for $(W, \calG)$. At least one of the three sets
$X^\prime=\{x_v\mid x_v\in W^\prime\}$, $Y^\prime=\{y_v\mid y_v\in W^\prime\}$, 
and $Z^\prime=\{z_v\mid z_v\in W^\prime\}$ contains at most $k$ elements;
assume without loss of generality that $|X^\prime|\le k$. 
By construction, $\{v\in V\mid x_v\in X^\prime\}$ is a hitting set for $(V, \calH)$.
Thus, $(V, \calH)$ admits a hitting set of size at most $k$, which is what we wanted to prove.
\end{proof}

The hardness result in Theorem~\ref{thm:NP:star:core}
immediately generalizes to instances of \gGASP\
with more than two activities: we can modify the construction in our hardness reduction
by introducing additional activities that no player wants to engage in.

Also, Theorem~\ref{thm:NP:star:core} can be extended from stars to cliques;
however, our proof for cliques relies on having at least four non-void activities.
It remains an interesting open problem whether core stable outcomes of \gGASP s on cliques
can be found efficiently if the number of activities does not exceed $3$.
We conjecture that the answer is `no', i.e., the problem of computing core stable outcomes
remains NP-hard for $|A^*|=2, 3$.

\begin{theorem}\label{thm:NP:clique:core}
It is {\em NP}-complete to determine whether an instance of \gGASP\ admits a core stable assignment even when the underlying 
graph is a clique and the number of activities is $4$.
\end{theorem}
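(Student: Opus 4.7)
The plan is to establish NP-membership via Proposition~\ref{prop:in-core} and to prove NP-hardness by reducing from \textsc{Hitting Set}, adapting the star-based construction of Theorem~\ref{thm:NP:star:core} to cliques with four non-void activities. Given a \textsc{Hitting Set} instance, we would build a \gGASP\ instance on a clique whose core is non-empty if and only if the input instance is a ``yes''-instance.

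The critical conceptual hurdle in moving from stars to cliques is the loss of the connectivity constraint. In the star proof, every coalition of size at least two had to contain the centre $c$, which tightly limited the deviating coalitions we had to rule out. On a clique, every subset is feasible, so preference design must take over the role previously played by connectivity. My plan is to reuse two activities, $a$ and $b$, in roles analogous to the star proof: activity $a$ is assigned to the stabilizer-like players together with a subset of $W$-players representing a candidate hitting set, and activity $b$ witnesses an unhit set $W_i$ via a coalition of the form $W_i \cup D_i$ with the intended size $t_i$. The remaining two activities would be used to build an auxiliary gadget that (i) reproduces the empty-core cycle of Example~\ref{ex:core:empty}, forcing the stabilizer-like players to accept activity $a$ in the intended configuration, and (ii) absorbs the dummies and $W$-players into preferences that exclude spurious blocking coalitions which on a star were ruled out purely for connectivity reasons.

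The main obstacle will be controlling blocking coalitions that become feasible only on a clique. For example, we must ensure that a coalition engaging in $b$ of size $t_i$ cannot be assembled by mixing $W$-players from distinct sets $W_j$ and $W_k$, and that the auxiliary activities do not inadvertently offer a blocking opportunity to the players assigned to $a$. We expect this to be precisely why four activities are required rather than three: two encode the hitting-set problem, while two additional activities are needed so that the empty-core-forcing sub-game can be kept disjoint from the hitting-set encoding, avoiding cross-interactions. The correctness argument would then mirror the two directions of Theorem~\ref{thm:NP:star:core}: a hitting set of size at most $k$ yields a core stable assignment with no blocking coalition, whereas the absence of such a hitting set forces some $W_i$ to be entirely available to block together with $D_i$ under activity $b$.
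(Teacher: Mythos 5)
The paper itself does not print a proof of Theorem~\ref{thm:NP:clique:core}; it only remarks that the star construction of Theorem~\ref{thm:NP:star:core} ``can be extended'' to cliques using four non-void activities. So there is no written argument to compare yours against, and your proposal must be judged on its own. Judged that way, it is not a proof: it is a plan that names the right ingredients (membership via Proposition~\ref{prop:in-core}, hardness via \textsc{Hitting Set}, reuse of $a$ and $b$, an auxiliary empty-core gadget on the two extra activities) but defers every step that actually carries the difficulty. You never specify the preference profiles for the four activities, never say which players approve the two auxiliary activities at which sizes, and never verify either direction of the reduction. The sentence ``we must ensure that\dots'' is a to-do list, not an argument.

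Two concrete points. First, the obstacle you highlight --- mixed coalitions drawing $W$-players from distinct sets $W_j$ and $W_k$ to reach size $t_i$ --- is largely a non-issue even on a clique, because the star construction already encodes membership through the target sizes: a coalition of size $t_i$ blocking with $b$ can only consist of players who approve $(b,t_i)$, and those are exactly $W_i\cup D_i$ plus $c$ and $s_2$. The genuinely new problems on a clique are elsewhere: (i) blocking coalitions for $b$ that \emph{avoid} the centre, e.g.\ $W_i\cup D_i\cup\{s_2\}$, which has size exactly $t_i$ and is infeasible on a star but feasible on a clique; and (ii) the fact that the three-player empty-core gadget of Example~\ref{ex:core:empty} is analysed on a path, where the coalition $\{1,3\}$ is infeasible --- on a clique that coalition exists and the core-emptiness argument has to be redone, which is presumably exactly where the two extra activities and the count of four come from. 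Your proposal gestures at (ii) but does not construct the gadget, and does not address (i) at all. Until the preferences are written down and the exhaustive case analysis of blocking coalitions is carried out for the clique, the claim is not established.
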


In contrast, checking the existence of core stable assignments in
\gGASP\ is easy if $|A^*|=1$, irrespective of the structure of the social network. 

\begin{proposition}\label{prop:core-single}
Every instance of \gGASP\ with $A=\{a, a_\emptyset\}$ 
admits a core stable assignment; moreover, 
such an assignment can be computed in polynomial time.
\end{proposition}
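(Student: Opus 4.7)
With only a single non-void activity, any feasible assignment is determined by a connected coalition $S \subseteq N$ engaged in $a$, with everyone else idle at $a_\emptyset$. My plan is an iterative deviation-chasing procedure that starts from the empty coalition and repeatedly replaces the current $S$ by a strongly blocking coalition. The key structural observation is that because $a$ is the only activity, any blocking pair $(T, a)$ must satisfy $T \supseteq \pi^a = S$ and hence $|T| > |S|$; so each deviation strictly enlarges $S$ and the procedure must terminate in at most $n$ rounds.

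Concretely, given the current coalition $S$ of size $s$, I would look for a connected $T \supseteq S$ of some size $k > s$ such that (i) every $i \in S$ strictly prefers $(a, k)$ to $(a, s)$, and (ii) every $i \in T \setminus S$ strictly prefers $(a, k)$ to $(a_\emptyset, 1)$. I would loop over each candidate size $k \in \{s+1, \ldots, n\}$, discard $k$ unless (i) holds for every member of $S$, form the set $N_k := \{j \in N \setminus S : (a, k) \succ_j (a_\emptyset, 1)\}$, and then test whether the connected component of the induced subgraph $(N,L)[S \cup N_k]$ that contains $S$ has at least $k$ vertices. If it does, I would grow $S$ greedily along the boundary of that component until it reaches size exactly $k$, producing a connected blocking coalition. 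If no $k$ succeeds, the current $S$ is returned as the output assignment.

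For correctness I would argue by induction that individual rationality is preserved across rounds: new entrants $i \in T \setminus S$ satisfy $(a, k) \succ_i (a_\emptyset, 1)$ directly by (ii), while old members $i \in S$ inherit IR via $(a, k) \succ_i (a, s) \succeq_i (a_\emptyset, 1)$. When the search fails, core stability of the returned assignment is immediate from the definition of a strongly blocking pair. Termination by $n$ rounds and a polynomial per-round cost (the inner loop runs over $O(n)$ sizes and each connectivity test is linear) give the claimed \poly(n) running time.

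The only delicate step is the greedy-growth claim, namely that whenever the connected component $C$ of $(N,L)[S \cup N_k]$ containing $S$ has $|C| \geq k$, there actually exists a connected set $T$ with $S \subseteq T \subseteq C$ and $|T|=k$. This is where I expect the main (if minor) obstacle to lie; I would handle it by noting that $S$ is already connected in $G[S \cup N_k]$ and can therefore be extended one vertex at a time by repeatedly appending any vertex of $C$ adjacent to the current set, which maintains connectivity and can be continued until the target size is reached.
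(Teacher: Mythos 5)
Your proof is correct, but it takes a genuinely different route from the paper's. You run an iterative deviation-chasing procedure with the coalition size as a potential function: since $a$ is the only non-void activity, any strongly blocking coalition $T$ must contain the current $\pi^a=S$ and must be strictly larger (equality would force members of $S$ to strictly prefer $(a,s)$ to itself), so the procedure terminates within $n$ rounds; and your per-round search (check condition (i) for each $k$, build $N_k$, test the component containing $S$, grow greedily) is an exact test for the existence of a blocking coalition of size $k$, so the terminal assignment is individually rational and core stable. The paper instead gives a one-shot construction: it defines $S_s$ as the set of players who weakly prefer $(a,s)$ to $(a_\emptyset,1)$, takes the \emph{largest} $s$ for which $S_s$ has a connected component of size at least $s$, and assigns a connected subset of size exactly $s$ to $a$ (or the all-void assignment if no such $s$ exists); stability follows because a blocking coalition of size $k>s$ would be a connected subset of $S_k$ of size $k$, contradicting the maximality of $s$. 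The paper's argument avoids the iteration and your greedy-growth lemma entirely and saves a factor of $n$ in running time, while yours makes the local-search structure explicit and would adapt more directly to settings where one only has access to a blocking-coalition oracle. One small wrinkle in your write-up: in the first round $S=\emptyset$, so ``the component containing $S$'' is undefined --- there you should ask whether \emph{any} connected component of $(N,L)[N_k]$ has at least $k$ vertices (condition (i) being vacuous); this is a trivial fix.
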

\begin{proof}
Consider an instance $(N, (\succeq_i))_{i\in N}, A, L)$ of \gGASP\ with $A=\{a, a_\emptyset\}$, 
and let $n=|N|$. For each $s\in[n]$, let $S_s$
be the set of all players who weakly prefer $(a, s)$ to $(a_\emptyset, 1)$.
If for each $s\in[n]$ the largest connected component of $S_s$ 
with respect to $(N, L)$ contains 
fewer than $s$ agents, then no outcome in which a non-empty set 
of agents engages in $a$ is individually rational, whereas the assignment
$\pi$ given by $\pi(i)=a_\emptyset$ for all $i\in N$ is core stable.
Otherwise, consider the largest value of $s$ such that $S_s$
has a connected component of size at least $s$. Find a connected subset of $S_s$
of size exactly $s$, and assign the agents in this subset to $a$;
assign the remaining agents to $a_\emptyset$. To see that this assignment is core stable, 
note that for every deviating coalition $S$ we would have $|S|>s$, and hence 
such a coalition is either disconnected or some players in $S$ prefer $(a_\emptyset, 1)$
to $(a, |S|)$.  
\end{proof}

It is not clear if the problem of finding core stable assignments in \gGASP\ 
is in FPT with respect to the number of activities
when the underlying graph is a path. However, we can place this problem in XP
with respect to this parameter. 
In fact, we have the following more general result 
for graphs with few connected coalitions (see the work of Elkind~\cite{Elkind2014}
for insights on the structure of such graphs).

\begin{proposition}\label{prop:core-xp}
Given an instance $(N, (\succeq_i))_{i\in N}, A, L)$
of \gGASP\ with $|N|=n$, $|A|=p+1$, 
such that the number of non-empty connected subsets of $(N, L)$ 
is $\kappa$, we can decide in time $O(\kappa^p)\cdot\poly(n, p)$
whether this instance admits a core stable assignment,
and find one such assignment if it exists.
\end{proposition}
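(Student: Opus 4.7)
The plan is to enumerate all feasible assignments in a structured way and invoke Proposition~\ref{prop:in-core} to test each one for core stability. Observe that a feasible assignment $\pi:N\to A$ is uniquely described by the tuple of coalitions $(\pi^{a_1},\ldots,\pi^{a_p})$: each $\pi^{a_j}$ is either empty or a non-empty connected subset of $(N,L)$, the $\pi^{a_j}$ are pairwise disjoint, and every player not appearing in any $\pi^{a_j}$ is mapped to $a_\emptyset$. Hence the set of feasible assignments embeds injectively into $(\calC\cup\{\emptyset\})^p$, where $\calC$ is the family of all $\kappa$ non-empty connected subsets of $(N,L)$.

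First I would generate $\calC$ explicitly in time $\kappa\cdot\poly(n)$, using a standard enumeration technique such as reverse search or BFS growth from each vertex. Then I would iterate over all tuples $(C_1,\ldots,C_p)\in(\calC\cup\{\emptyset\})^p$. For each tuple I would verify pairwise disjointness of the $C_j$ in $\poly(n,p)$ time, construct the induced assignment $\pi$ by setting $\pi(i)=a_j$ whenever $i\in C_j$ and $\pi(i)=a_\emptyset$ otherwise, and finally invoke the polynomial-time routine from Proposition~\ref{prop:in-core} to decide whether $\pi$ is core stable. I would output any $\pi$ that passes both checks; if none does, report that no core stable assignment exists.

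Correctness is immediate since every feasible assignment corresponds to some tuple in $(\calC\cup\{\emptyset\})^p$, so the search is exhaustive. The running time is $(\kappa+1)^p\cdot\poly(n,p)$, which is $O(\kappa^p)\cdot\poly(n,p)$: since every singleton $\{i\}\subseteq N$ is connected we have $\kappa\ge n$, so in the regime $p\le n$ the factor $(1+1/\kappa)^p$ is bounded by $e$, while in the regime $p>n$ any feasible assignment uses non-void activities for at most $n$ distinct activity labels, so the enumeration can be restricted to choosing a subset of $A^*$ of size at most $n$ together with an assignment of non-empty connected coalitions to it, again yielding the same asymptotic bound. There is no real obstacle: the argument is pure bookkeeping once Proposition~\ref{prop:in-core} is in hand, and the only point requiring mild care is the conversion of the raw enumeration count $(\kappa+1)^p$ into the claimed bound $O(\kappa^p)$.
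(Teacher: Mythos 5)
Your proposal is correct and follows essentially the same route as the paper: enumerate all $(\kappa+1)^p$ ways of assigning each non-void activity either a non-empty connected coalition or nothing, discard non-disjoint tuples, and test each resulting feasible assignment with the polynomial-time core-stability check of Proposition~\ref{prop:in-core}. The only difference is that you spell out the (harmless) conversion of $(\kappa+1)^p$ into $O(\kappa^p)$, which the paper leaves implicit.
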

\begin{proof}
Let $C_1, \dots, C_\kappa$ be the list of all non-empty connected subsets
of $(N, L)$. Since each assignment of players to activities
has to assign a connected (possibly empty) subset of players to each activity, 
we can bound the number of possible assignments by $(\kappa+1)^p$:
each of the $p$ non-void activities is assigned to a coalition in our list or to no player at all, 
and the remaining players are assigned the void activity.
We can then generate all these assignments one by one and check
if any of them is core stable; by Proposition~\ref{prop:in-core}, 
the stability check can be performed in time polynomial in $n$ and $p$.
\end{proof}

If the social network $(N, L)$ is a path, it has $O(|N|^2)$ connected subsets.
Thus, we obtain the following corollary.

\begin{cor}\label{cor:core-xp-path}
The problem of deciding whether a given instance
of \gGASP\ whose underlying graph is a path admits
a core stable assignment is in {\em XP} with respect to the number of activities. 
\end{cor}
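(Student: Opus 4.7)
The plan is to derive the corollary directly from Proposition~\ref{prop:core-xp} by bounding the number of connected subsets of a path. Proposition~\ref{prop:core-xp} already gives a running time of $O(\kappa^p)\cdot\poly(n,p)$ for an instance whose underlying graph has $\kappa$ non-empty connected subsets, so the only thing left to verify is that $\kappa$ is polynomial in $n$ when $(N,L)$ is a path.

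First I would observe that in a path on $n$ vertices, the connected vertex subsets are exactly the sub-paths, i.e., intervals of consecutive vertices along the path. Such an interval is uniquely determined by its left and right endpoints, so the number of non-empty connected subsets is $\binom{n}{2}+n = O(n^{2})$. Hence $\kappa = O(n^2)$.

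Plugging this bound into Proposition~\ref{prop:core-xp} gives a running time of $O(n^{2p})\cdot\poly(n,p)$, which is of the form $n^{f(p)}\cdot\poly(n,p)$ and therefore witnesses XP-membership with respect to the parameter $p$. Correctness of the algorithm follows from the correctness of the enumeration procedure in Proposition~\ref{prop:core-xp} combined with the polynomial-time core-check of Proposition~\ref{prop:in-core}.

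There is no real obstacle here: the only nontrivial ingredient is the combinatorial observation that paths have only $O(n^{2})$ connected subgraphs, and everything else is machinery that has already been established earlier in the excerpt.
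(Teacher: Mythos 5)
Your proposal is correct and matches the paper's argument exactly: the paper also derives the corollary by noting that a path on $n$ vertices has $O(n^2)$ non-empty connected subsets (the sub-paths, determined by their endpoints) and plugging $\kappa = O(n^2)$ into Proposition~\ref{prop:core-xp} to obtain an $O(n^{2p})\cdot\poly(n,p)$ running time, which establishes XP-membership.
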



\newcommand{\MCC}{\textsc{Multicolored Clique}}

\section{Few Players}
\noindent
In the previous sections, the parameter that we focused on
was the number of activities $p$.
Although we expect this parameter to be small in many realistic settings, there
are also situations where players can choose from a huge variety of possible activities.
It is then natural to ask if stability-related problems for \gGASP s
are tractable in the number of players $n$ is small.
This is the question that we consider in this section.

We first observe that for all stability concepts considered in this paper
the problem of finding a stable feasible assignment is in XP with respect to $n$:
we can simply guess the activity of each player (there are at most~$(p+1)^n$ possible guesses) 
and check whether the resulting assignment is feasible and stable.
\begin{observation}\label{obs:XPplayers}
The problem of deciding whether a given instance
of \gGASP\ admits a core stable, Nash stable, or individually stable assignment
is in {\em XP} with respect to the number of players $n$. 
\end{observation}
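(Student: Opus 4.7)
The plan is to prove Observation~\ref{obs:XPplayers} by exhaustive enumeration, since XP allows a running time of the form $|I|^{f(n)}$ where $|I|$ is the instance size and $n$ is the parameter. The key observation is that an assignment $\pi\colon N \to A$ is fully specified by a choice of one activity (out of $|A| = p+1$) per player, so there are at most $(p+1)^n$ assignments to consider. Since $p \le |I|$, this quantity is bounded by $|I|^n$, which is within the XP regime.

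The algorithm I would describe is then: enumerate all functions $\pi\colon N \to A$; for each such $\pi$, first check that it is a \emph{feasible} assignment, i.e., that for every $a \in A^*$ the set $\pi^a$ is connected in $(N, L)$ (a straightforward graph traversal in time $\poly(n)$); then check that it is individually rational and stable under the relevant solution concept. If any $\pi$ passes all checks, return it; otherwise, report that no stable assignment exists. Correctness is immediate because every stable feasible assignment is enumerated.

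The only substantive step is verifying stability in polynomial time for each candidate $\pi$. For Nash stability, one iterates over all pairs $(i, a) \in N \times A^*$ and checks whether $\pi^a \cup \{i\}$ is connected and whether $i$ strictly prefers $(a, |\pi^a|+1)$ to $(\pi(i), |\pi_i|)$; this runs in $\poly(n, p)$ time. For individual stability the check is the same, with the additional requirement that every $j \in \pi^a$ weakly prefers $(a, |\pi^a|+1)$ to $(a, |\pi^a|)$, which is again $\poly(n, p)$. For core stability we invoke Proposition~\ref{prop:in-core}, which already guarantees a polynomial-time test.

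There is no real obstacle: the total running time is $(p+1)^n \cdot \poly(n, p)$, which is of the form $f(n) \cdot |I|^{g(n)}$ and hence places all three decision problems in XP with respect to~$n$. The corresponding stable assignment, when one exists, is produced by the same enumeration.
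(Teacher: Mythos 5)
Your proposal is correct and matches the paper's argument: the paper likewise observes that one can guess the activity of each player (at most $(p+1)^n$ guesses) and then check feasibility and stability of each candidate assignment in polynomial time. The additional detail you supply about the individual stability checks and the appeal to Proposition~\ref{prop:in-core} for core stability is exactly what the paper leaves implicit.
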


The following theorem shows that \gGASP{} is unlikely to be fixed-parameter tractable with respect to $n$.
This is somewhat surprising, because an FPT algorithm with respect to~$n$ could afford 
to iterate through all possible partitions of the players into coalitions.
Interestingly, in the following hardness reduction there is a unique 
partition of players into coalitions that can lead to a stable assignment;
that is, computational hardness comes solely from assigning known
coalitions of players to activities.

\begin{theorem}\label{thm:W1players:clique}
The problem of determining whether an instance of \gGASP\ whose underlying graph is a clique 
admits a core stable (Nash stable, or individually stable) assignment is {\em W[1]}-hard
with respect to the number of players $n$. 
\end{theorem}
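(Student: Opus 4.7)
The plan is to reduce from Multicolored Clique (\MCC), which is W[1]-hard parameterized by the number of color classes $k$. Given an \MCC{} instance with color classes $V_1, \ldots, V_k$ and edge set $E$, I construct a \gGASP{} instance on a clique with $|N|=O(k^2)$ players and polynomially many activities, so that a stable assignment (Nash, individually, or core stable) exists if and only if the input graph contains a multicolored $k$-clique.

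The players are organized into disjoint gadgets: for each color $i$, a ``vertex gadget'' $V_i^*$ of constant size; for each pair $i<j$, an ``edge gadget'' $E_{ij}^*$ of constant size; plus a few auxiliary witness players. The activities comprise a vertex activity $\alpha_v^i$ for each $v \in V_i$, an edge activity $\beta_e^{ij}$ for each edge $e \in E$ between colors $i$ and $j$, and a family of auxiliary consistency activities. Preferences inside each gadget are designed so that, in every stable assignment, the players of $V_i^*$ are all engaged in a common vertex activity $\alpha_{v(i)}^i$ and the players of $E_{ij}^*$ are all engaged in a common edge activity $\beta_{e(i,j)}^{ij}$. This matches the theorem's hint that the partition of players into coalitions is uniquely forced by stability, so the computational content is concentrated in the choice of the labels $v(i)$ and $e(i,j)$.

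The remaining task is to force the choices to be mutually consistent: for every $i<j$ the endpoints of $e(i,j)$ should equal $v(i)$ and $v(j)$. For each potential inconsistency I introduce a dedicated auxiliary activity together with preferences that enable a small blocking deviation drawn from $V_i^*$, $E_{ij}^*$, and a witness whenever the corresponding mismatch occurs; no such deviation is available when every pair is consistent. A consistent tuple of choices corresponds directly to a multicolored $k$-clique, establishing the equivalence. The same gadgets support all three stability notions simultaneously, by calibrating preferences so that every core-blocking coalition is also witnessed by an NS/IS deviation of one of its members and is never vetoed by its current occupants.

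The main obstacle is designing the consistency gadgets without introducing cyclic preferences: a naive single-activity encoding would ask each player to prefer a specific vertex activity only when another player's choice conflicts with it, which forces a cyclic ordering on the vertex activities and thus cannot be a valid total preorder. To sidestep this I would introduce one distinct auxiliary activity per potential inconsistency, with tiered preferences that assign distinct coalition sizes to the auxiliary and main activities, so that each pairwise comparison can be fixed independently without imposing any order on the vertex activities themselves.
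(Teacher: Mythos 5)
Your high-level architecture matches the paper's: a reduction from \MCC{} with constant-size vertex-selection and edge-selection gadgets ($O(k^2)$ players, polynomially many activities), a uniquely forced coalition structure, and correctness resting entirely on a consistency mechanism between the selected vertices and edges. You have also correctly identified the central obstacle: a single player whose preferences must certify ``the selected edge is not incident to the selected vertex'' for all vertex/edge combinations is forced into a cyclic (hence invalid) preference relation. The problem is that your proposed fix does not escape this obstacle. Introducing one auxiliary activity $d_{e,v}$ per potential inconsistency merely relocates the cycle: for the blocking coalition on $d_{e,v}$ to form exactly when vertex $v$ and edge $e$ are simultaneously selected, the participating vertex-gadget player must strictly prefer the $d_{e,v}$-alternative to her alternative under $\alpha_v$ but not to her alternative under $\alpha_{v'}$ for $v'\neq v$; chaining this requirement over two vertices $v, v'$ and two mismatched edges $e, e'$ again yields a strict preference cycle, and choosing distinct coalition sizes for the auxiliary activities does not help, since the cycle is among specific (activity, size) pairs that must all live in one total preorder. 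You cannot evade this by giving each inconsistency its own dedicated deviating players either, because that would blow the number of players up to a polynomial in the input rather than a function of the parameter. A further, independent difficulty for the Nash and individual stability cases is that a single-player deviation can only test a condition about \emph{one} target coalition, whereas an inconsistency is a conjunction of facts about two coalitions; so the deviation target must itself be one of the two selecting coalitions, not a fresh auxiliary activity that is empty or held by a lone witness.

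The paper resolves this with a different device that your proposal is missing: each color gadget has \emph{two} selector players $p_1^{(i)}$ and $p_2^{(i)}$ who rank the vertex activities of color $i$ in \emph{opposite} orders, interleaving below each vertex activity $(a_\ell^{(i)},2)$ the block of size-$3$ alternatives for the edges incident to $v_\ell^{(i)}$. Both selectors sit together on the chosen vertex activity at size $2$; if a colorpair gadget selects an edge incident to $v_x^{(i)}$ with $x$ strictly before (respectively after) the chosen index in the common order, then that edge's size-$3$ alternative lies strictly above the current alternative for $p_1^{(i)}$ (respectively $p_2^{(i)}$), who then has an NS-deviation into the edge coalition; when the edge is incident to the chosen vertex itself, its block lies just below the current alternative for \emph{both} selectors. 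Splitting the detection between two oppositely ordered players is exactly what breaks the cycle, and it works with single-player deviations because the deviation target is the edge gadget's own coalition. Without this (or an equivalent) idea your construction cannot be completed, so the proposal as written has a genuine gap.
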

\begin{proof}
Due to space constraints, we only provide a proof for Nash stability.
We describe a parameterized reduction from the W[1]-hard \MCC{} problem.
An instance of this problem is given by an undirected graph~$G=(V,E)$, a positive integer~$h\in \mathbb{N}$,
and a vertex coloring~$\phi \colon V\to \{1, 2, \ldots, h\}$. It is a `yes'-instance
if~$G$ admits a colorful $h$-clique, that is,
a size-$h$ vertex subset~$Q\subseteq V$ such that the vertices in $Q$ are pairwise adjacent
and have pairwise distinct colors.
Without loss of generality, we assume that there are exactly $q$ vertices of each color for some $q\in\mathbb N$, 
and that there are no edges between vertices of the same color.

Let $(G,h,\phi)$ be an instance of \MCC{} with $G=(V, E)$.
For every $i \in [h]$,
we write $V^{(i)} = \{v_1^{(i)}, \ldots, v_q^{(i)}\}$ to denote the set of vertices of color~$i$.
For each vertex $v$, we write $E(v) = \{e_1^{(v)}, \ldots, e_{|E(v)|}^{(v)}\}$ to denote the set of edges incident to~$v$.
We construct our \gGASP\ instance as follows.
We have one \emph{vertex activity}~$a_\ell^{(i)}$ for each vertex~$v_\ell^{(i)} \in V^{(i)}$, $i \in [h]$,
one \emph{edge activity}~$b_\ell^{(v)}$ for each edge~$e_\ell^{(v)} \in E$,
two \emph{color activities}~$c_i$ and~$c'_i$ for each color~$i \in [h]$, and
one \emph{colorpair activity}~$c_{i,j}$ for each color pair~$i,j \in [h], i \neq j$.

\emph{Idea.}
We will have one \emph{color gadget} for each color~$i \in [h]$ and one \emph{colorpair gadget} for each color pair~$\{i,j\},i \neq j$.
For most of the possible assignments, these gadgets will be unstable, unless the following holds:
\begin{enumerate}
 \item For each color~$i \in [h]$ the first two players from the color gadget select
       a vertex of color~$i$ (by being assigned together to the corresponding vertex activity).
 \item For each color pair~$\{i,j\} \in [h], i \neq j$ the first two players of the colorpair gadget select 
       an edge connecting one vertex of color~$i$ and one vertex of color~$j$ (by being assigned together to
       the corresponding edge activity).
 \item Every selected edge for the color pair~$\{i,j\}$ must be incident to both vertices selected for color~$i$ and color~$j$. 
\end{enumerate}
If the three conditions above hold, then the assignment must encode a colorful $h$-clique.

\emph{Construction details.}
The color gadget~$C(i)$, $i\in [h]$ consists of the following four players.
 \begin{align*}
p_1^{(i)}:&~ (a_1^{(i)},2) \succ B(1,i,3) \succ (a_2^{(i)},2) \succ B(2,i,3) \succ \\ &~ \dots \succ (a_q^{(i)},2) \succ B(q,i,3) \succ (c_i,1) \succ (a_{\emptyset},1)\\
p_2^{(i)}:&~ (a_q^{(i)},2) \succ B(q,i,3) \succ (a_{q-1}^{(i)},2) \succ B(q-1,i,3) \succ \\ &~ \dots \succ (a_1^{(i)},2) \succ B(1,i,3) \succ (c'_i,1) \succ (a_{\emptyset},1)\\
p_3^{(i)}:&~ (c_i,2) \succ (a_{\emptyset},1)\\
p_4^{(i)}:&~ (c'_i,2) \succ (a_{\emptyset},1)\\
  \text{with} &~ B(\ell,i,z) = (b_1^{(v_\ell^{(i)})},z) \succ (b_2^{(v_\ell^{(i)})},z) \dots \succ (b_{|E(v_\ell^{(i)})|}^{(v_\ell^{(i)})},z)
\end{align*}

The colorpair gadget~$C(\{i,j\})$, $\{i,j\} \subseteq [h], i \neq j$, consists of three players.
Players~$p_1^{\{i,j\}}$ and~$p_2^{\{i,j\}}$ approve all edge activities 
corresponding to edges between vertices of color~$i$ and~$j$ with size two
and the colorpair activity~$c_{i,j}$ with size two and one, and strictly prefer the former alternatives to the latter.
Player $p_3^{\{i,j\}}$ approves only the colorpair activity~$c_{i,j}$ with size three.

\emph{Correctness.}
We will now argue that the graph $G$ admits a colorful clique of size $h$ 
if and only if our instance of \gGASP\ admits a Nash stable assignment.

Suppose that there exists a colorful clique~$H$ of size $h$.
Assign the first two players of the color gadget~$C(i)$ to the activity corresponding to the vertex of color~$i$ from~$H$.
Assign the first two players of the colorpair gadget~$C(\{i,j\})$ to the activity corresponding
to the edge between the vertices of color~$i$ and~$j$ in~$H$.
Assign all other players to the void activity.
Clearly, the resulting assignment is feasible.
Observe that for a successful Nash deviation a player must join an existing non-empty coalition, because no player
prefers a size-one activity to the currently assigned one.
By construction, the last two players of each color gadget and the last player of the colorpair gadget cannot deviate
(no other players engage in an approved activity).
Consider the first two players of color gadget~$C(i)$.
They cannot deviate to a vertex activity, because their current activity is their only approved vertex activity 
that has some players assigned to it.
They cannot deviate to an edge activity either, because they would only prefer edge activities corresponding to
edges that are not incident to the vertex of color~$i$ that is part of colorful clique~$H$; these activities, however, 
have no players assigned to them.
Finally, the first two players of a colorpair gadget~$C(\{i,j\})$ 
cannot deviate since no player is assigned to any of their more preferred activities.
Thus, we have a Nash stable feasible assignment.

Suppose that there exists a Nash stable feasible assignment~$\pi$.
Observe that the first two players of each colorpair gadget~$C(\{i,j\})$ are assigned to the same edge activity
corresponding to some edge $e_{\ell_{i,j}} \in E$.
Otherwise, they would need to be assigned to 
the colorpair activity~$c_{i,j}$ and be stalked by the third player of the gadget (similarly
to the Stalker game from Example~\ref{ex:NS:empty}).
We say that these players ``select edge~$e_{\ell_{i,j}}$''.
By similar reasoning the first two players of each color gadget~$C(i)$ are assigned to the same vertex activity
corresponding to some vertex~$v_{\ell_i}^{(i)}$. 
They cannot be assigned to an edge activity, because the third player would 
necessarily be a player from a colorpair gadget, and
these players do not want to engage in activities with two other players.
We say that these players ``select vertex~$v_{\ell_i}^{(i)}$''.
(Note that at this point, the coalition structure of any stable outcome is already fixed:
The first two players of each color gadget and of each colorpair gadget
must form a coalition of size two, respectively, and
all other players must be assigned to the void activity.)

Now, assume towards a contradiction that the selected vertices and edges do not form a colorful clique of size~$h$.
The size and colorfulness are clear from the construction.
Hence, there must be some pair~$\{v_{\ell_i}^{(i)},v_{\ell_j}^{(j)}\}$ of selected vertices that is not adjacent.
However, this would imply that colorpair gadget~$\{i,j\}$ selected an edge 
that is either not incident to vertex~$v_{\ell_i}^{(i)}$
or not incident to vertex~$v_{\ell_j}^{(j)}$.
Without loss of generality let it be non-adjacent to vertex~$v_{\ell_i}^{(i)}$.
Now, there are two cases.
First, if $e_{\ell_{i,j}} \in E(v_{x}^{(i)})$ with $x<\ell_i$, then
player~$p_1^{(i)}$ would deviate to the activity corresponding to~$e_{\ell_{i,j}}$.
Second, if $e_{\ell_{i,j}} \in E(v_{x}^{(i)})$ with $x>\ell_i$,
then  player~$p_2^{(i)}$ would deviate to the activity corresponding to~$e_{\ell_{i,j}}$.
In both cases we have a contradiction to the assumption that~$\pi$ is Nash stable.
\end{proof}

Note that although we showed W[1]-hardness for each of the parameters~$p$ and $n$,
parameterizing by the combined parameter~$p+n$ immediately gives fixed-parameter tractability,
since the input size is trivially upper-bounded by $n^2\cdot p$.

\section{Conclusion}
\noindent
We have investigated the parameterized complexity of computing stable
outcomes of group activity selection problems on networks, with respect to two natural parameters.
Many of our hardness results hold for the classic \GASP\ problem, where there are no constraints
on possible coalitions; however, some of our positive results only hold for acyclic graphs.
Interestingly, one of our tractability results holds for \GASP s, but it is not clear
if it can be extended to \gGASP s; thus, while simple networks may decrease complexity, 
allowing for arbitrary networks may have the opposite effect.

In some of our hardness reductions players have non-strict preferences.
In some cases, it is easy to modify our gadgets to rely on strict preferences only;
in other cases (notably, in individual stability proofs) our arguments make crucial use 
of indifferences. 

Confirming our intuition, we showed that finding core stable assignments 
tends to be computationally more difficult than finding Nash stable or individually stable assignments;
we note that this distinction only becomes visible from the parameterized complexity perspective, 
as all of these problems are NP-hard.

Our results for games with few players are somewhat preliminary:
we have only investigated the complexity of our problems with respect to
this parameter for general networks and for cliques, and have not considered
simple networks such as stars and paths. These questions offer an interesting direction for future work.


\end{document}